\pdfoutput=1 

\documentclass[11pt]{article}

\usepackage{comment}

\title{Approximating Tensor Network Contraction with Sketches}
\author{Mike Heddes\\University of California, Irvine\\\texttt{mheddes@uci.edu} \and Igor Nunes\\University of California, Irvine\\\texttt{igord@uci.edu} \and Tony Givargis\\University of California, Irvine\\\texttt{givargis@uci.edu} \and Alex Nicolau\\University of California, Irvine\\\texttt{nicolau@ics.uci.edu}}
\date{}

\usepackage[margin=1in]{geometry}
\usepackage[parfill]{parskip}

\usepackage[numbers]{natbib}

\usepackage{graphicx}
\usepackage{xcolor}
\usepackage{booktabs}
\usepackage[hidelinks]{hyperref}
\usepackage{pgf}
\usepackage{import}
\usepackage{tikz}
\usetikzlibrary{calc}
\usetikzlibrary{positioning}
\usetikzlibrary{shapes}
\usetikzlibrary{fit}
\usepackage{caption, float}
\usepackage{algorithm}
\usepackage{algpseudocode}

\algnewcommand{\LineComment}[1]{\State \(\triangleright\) #1}

\usepackage{enumitem}

\usepackage{amsthm}
\usepackage{thmtools}
\usepackage{thm-restate}

\usepackage{amsmath}
\usepackage{amsfonts}
\usepackage{amssymb}
\usepackage{amsthm}
\usepackage{mathtools}
\usepackage{bm}

\newcommand{\eqnote}[1]{&\text{\llap{#1}}}

\newtheorem{definition}{Definition}
\newtheorem{lemma}{Lemma}

\newtheorem{corollary}{Corollary}

\newtheorem{remark}{Remark}

\newtheorem{example}{Example}

\DeclareMathOperator{\E}{\mathbb{E}}          %
\DeclareMathOperator{\Var}{Var}               %
\DeclareMathOperator{\nnz}{nnz}               %
\DeclareMathOperator{\vecop}{vec}             %
\DeclareMathOperator{\mat}{mat}               %
\DeclareMathOperator{\tr}{tr}                 %
\DeclareMathOperator{\tc}{tc}                 %
\def\tran{^{\mathsf{T}}}                      %
\def\dft{F}                                   %
\def\idft{\dft^{-1}}                          %
\def\frob{\mathrm{F}}                         %

\DeclarePairedDelimiter\rdbr{\lparen}{\rparen}
\DeclarePairedDelimiter\sqbr{\lbrack}{\rbrack}
\DeclarePairedDelimiter\anbr{\langle}{\rangle}
\DeclarePairedDelimiter\set\{\}
\DeclarePairedDelimiter\abs{\lvert}{\rvert}

\DeclarePairedDelimiter\norm{\lVert}{\rVert}

\def\nats{\mathbb{N}}   %
\def\reals{\mathbb{R}}  %
\def\comps{\mathbb{C}}  %

\def\ivone{{\mathbf{1}}}

\def\ivi{{\mathbf{i}}}
\def\ivj{{\mathbf{j}}}

\def\ivl{{\mathbf{l}}}
\def\ivm{{\mathbf{m}}}
\def\ivn{{\mathbf{n}}}
\def\ivo{{\mathbf{o}}}

\def\ivr{{\mathbf{r}}}
\def\ivs{{\mathbf{s}}}

\def\vone{{\bm{1}}}

\def\vb{{b}}

\def\vy{{y}}

\def\evb{{b}}

\def\evy{{y}}

\def\mB{{B}}

\def\emB{{B}}

\newcommand{\tens}[1]{\mathcal{#1}}
\def\tA{{\tens{A}}}
\def\tB{{\tens{B}}}

\def\tX{{\tens{X}}}
\def\tY{{\tens{Y}}}

\newcommand{\etens}[1]{\mathcal{#1}}

\def\etA{{\etens{A}}}
\def\etB{{\etens{B}}}

\def\etX{{\etens{X}}}
\def\etY{{\etens{Y}}}

\definecolor{red}{HTML}{E31B23}
\definecolor{green}{HTML}{1BA13D}
\definecolor{blue}{HTML}{005CAB}
\definecolor{purple}{HTML}{78468E}
\definecolor{orange}{HTML}{EA8810}

\newcommand{\RNum}[1]{\uppercase\expandafter{\romannumeral #1\relax}}

\begin{document}

\maketitle

\begin{abstract}

Tensor network contraction is a fundamental mathematical operation that generalizes the dot product and matrix multiplication. It finds applications in numerous domains, such as database systems, graph theory, machine learning, probability theory, and quantum mechanics. 
Tensor network contractions are computationally expensive, in general requiring exponential time and space.
Sketching methods include a number of dimensionality reduction techniques that are widely used in the design of approximation algorithms.
The existing sketching methods for tensor network contraction, however, only support acyclic tensor networks. 
We present the first method capable of approximating arbitrary tensor network contractions, including those of cyclic tensor networks.
Additionally, we show that the existing sketching methods require a computational complexity that grows exponentially with the number of contractions. 
We present a second method, for acyclic tensor networks, whose space and time complexity depends only polynomially on the number of contractions. 

\end{abstract}

\section{Introduction}
\label{sec:introduction}

Tensor network contraction (TNC) is a fundamental mathematical operation with applications in various disciplines, including quantum physics, machine learning, database systems, graph theory, and probability theory. Analogous to how tensors generalize vectors and matrices to higher orders, TNC generalizes the dot product and matrix multiplication. 
TNC can be understood as a \textit{tensor contraction} (see Definition~\ref{def:tensor-contraction}) applied to the \textit{tensor product} (see Definition~\ref{def:tensor-product}) of multiple tensors. 
A tensor contraction is specified by a set of shared indices that form connections between tensors, creating a \textit{tensor network}.

The definition of TNC below uses the \textit{Iverson bracket} notation $\sqbr{S}$, which equals to one if the statement $S$ is true and zero otherwise; the consecutive integer set $\sqbr{k} \equiv \set{1, \dots, k}$ for any non-negative integer $k$; and $\ivi_K$ to denote the \textit{multi-index} $\ivi \equiv (i_1, \dots, i_q)$, which is restricted to the indices in $K \subseteq \sqbr{q}$. More details on tensors and multi-index notation are provided in Section~\ref{sec:background}.

\begin{definition}[Tensor contraction]
\label{def:tensor-contraction}
    Given any order-$q$ tensor $\tX \in \reals^{n_1 \times \dots \times n_q}$ and contractions $E \subset \sqbr{q} \times \sqbr{q}$, such that for every $(u, v) \in E$, $u \neq v$ and $n_u = n_v$.
    Let $K = \set{u, v : (u, v) \in E}$ and $K' = \sqbr{q} \setminus K$. 
    The tensor contraction $\tc(\tX, E)$ is an order-$\abs{K'}$ tensor specified by
\begin{align*}
    \tc(\tX, E)(\ivi_{K'}) = \sum_{\ivi_{K} = \ivone_{K}}^{\ivn_{K}} \etX(\ivi) \prod_{(u,v) \in E} \sqbr{i_u = i_v}, \qquad \forall \ivone_{K'} \leq \ivi_{K'} \leq \ivn_{K'}.
\end{align*}
\end{definition}

\begin{definition}[Tensor product]
\label{def:tensor-product}
Given any two tensors $\tA \in \reals^{n_1 \times \cdots \times n_p}$ and $\tB \in \reals^{m_1 \times \cdots \times m_q}$ of orders $p$ and $q$ respectively, their tensor product $\tA \times \tB$ is specified by $(\tA \times \tB)(\ivi, \ivj) = \etA(\ivi)\etB(\ivj)$
for all $\ivone \leq \ivi \leq \ivn$ and $\ivone \leq \ivj \leq \ivm$, where $\tA \times \tB \in \reals^{n_1 \times \cdots \times n_p \times m_1 \times \cdots \times m_q}$.
\end{definition}

\begin{definition}[Tensor network contraction (TNC)]
\label{def:tensor-network-contraction}
For any $p \in \nats^{+}$, tensors $\tX_1, \dots, \tX_p$, and contractions $E$, the tensor network contraction is given by $\tc(\tX_1 \times \cdots \times \tX_p, E)$.
\end{definition}

The following example of a TNC is included to clarify the definitions above.
The tensor network in Example~\ref{example:tensor-network-contraction} is shown in Figure~\ref{fig:example-tensor-diagram} using \textit{tensor diagram notation}, also known as \textit{Penrose graphical notation}. In this notation, tensors are denoted by shapes with legs, where each leg represents a mode, and connected legs denote a contraction.

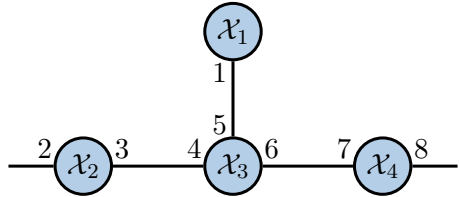
\begin{figure}[h]
    \centering
    \scalebox{1}{\begin{tikzpicture}[auto, node distance=18mm, thick, main node/.style={circle, fill=blue!30, draw, minimum size=0.75cm, inner sep=0pt}, every node/.style={line width=0.4mm}, every path/.style={line width=0.4mm}]
    \node[main node] (1) {$\tX_1$};
    \node[main node] (3) [below=10mm of 1] {$\tX_3$};
    \node[main node] (2) [left=12mm of 3] {$\tX_2$};
    \node[main node] (4) [right=12mm of 3] {$\tX_4$};
    \node[main node, fill=none, draw=none] (l2) [left=6mm of 2] {};
    \node[main node, fill=none, draw=none] (r4) [right=6mm of 4] {};

    \draw[-] (1) -- (3) node [pos=0.15, left=-2pt] {1} node [pos=0.85, left=-2pt] {5};
    \draw[-] (2) -- (3) node [pos=0.10, above=-1pt] {3} node [pos=0.90, above=-1pt] {4};
    \draw[-] (3) -- (4) node [pos=0.10, above=-1pt] {6} node [pos=0.90, above=-1pt] {7};
    \draw[-] (2) -- (l2) node [pos=0.20, above=-1pt] {2} node [pos=0.90, above=-1pt] {};
    \draw[-] (4) -- (r4) node [pos=0.20, above=-1pt] {8} node [pos=0.90, above=-1pt] {};
\end{tikzpicture}}
    \caption{Tensor diagram of an example tensor network}
    \label{fig:example-tensor-diagram}
\end{figure}

\begin{example}
\label{example:tensor-network-contraction}
The TNC $\tc(\tX_1 \times \tX_2 \times \tX_3 \times \tX_4, E) = \tY$ (shown in Figure~\ref{fig:example-tensor-diagram}), consists of the four tensors $\tX_1 \in \reals^{n_1}$, $\tX_2 \in \reals^{n_2 \times n_3}$, $\tX_3 \in \reals^{n_4 \times n_5 \times n_6}$, and $\tX_4 \in \reals^{n_7 \times n_8}$; and the three contractions $E = \set{(1, 5), (3, 4), (6, 7)}$.
Then, from Definition~\ref{def:tensor-contraction}, we have $K = \set{1, 3, 4, 5, 6, 7}$, $K' = \set{2, 8}$, and after a reduction of the Iverson brackets, we get
\begin{align*}
    \etY(i_2, i_8) = \sum_{i_1 = 1}^{n_1} \sum_{i_3 = 1}^{n_3} \sum_{i_6 = 1}^{n_6} \etX_1(i_1)\etX_2(i_2, i_3)\etX_3(i_3, i_1,i_6)\etX_4(i_6, i_8), \qquad \forall (1,1) \leq (i_2, i_8) \leq (n_2, n_8).
\end{align*}
\end{example}

The general problem of TNC is known to be \textsf{NP}-hard~\cite{bridgeman2017hand, arad2010quantum}. 
In \cite{markov2008simulating}, the authors show that exact algorithms have a running time exponential in the treewidth of the tensor network.
In this paper, we introduce a polynomial-complexity $(\epsilon, \delta)$-approximation for TNCs of acyclic tensor networks based on sketching. 
Additionally, we present the first method capable of approximating arbitrary TNCs, including those of cyclic tensor networks.
To our knowledge, this is the first work to directly address the general problem with sketching, and it establishes a new bound for important instances of the problem.

Sketching, or dimensionality reduction, has attracted significant attention for its ability to swiftly scale down problems while maintaining an approximation of the solution space.
This technique has been applied to a number of fundamental problems in numerical linear algebra and machine learning, including least squares regression and low-rank approximation \cite{woodruff2014sketching}.
Interest in these methods surged following the influential paper \cite{alon1996space}, which introduced the \textit{AMS sketch}---a method for estimating the second frequency moment of a data stream.
A notable improvement of the AMS sketch, called the \textit{count sketch} \cite{charikar2002finding}, significantly reduced the sketching time complexity.
Both sketches later gained popularity as inner product estimators \cite{alon1999tracking, cormode2005sketching}.

In the realm of database theory, the estimation of inner products holds significant importance. It serves as a core aspect of query optimization, in particular, join size estimation. This importance has motivated the database community to further advance sketching methods. 
Among the notable results are two generalizations of inner product sketches to accommodate multi-join queries: first of the AMS sketch \cite{dobra2002processing}, and later of the count sketch \cite{heddes2024convolution}.
Building upon this, our work introduces a new perspective by showing that the problem of join size estimation, as addressed in these previous works, corresponds to the problem of approximating TNC.

Intuitively, expanding the estimation from the join size of single-joins to that of multi-joins, mirrors the broader generalization from approximating inner products to approximating TNCs. 
Given this perspective, the motivation for our work arises from the observation that (1) the aforementioned join size estimators only support acyclic queries and (2) their accuracy deteriorates exponentially with an increase in the number of joins, which correspond to contractions. In fact, we show an exponential lower bound in Section~\ref{sec:mativation} for the technique presented in~\cite{heddes2024convolution}.

While our initial motivation stems from database theory, our reformulation of the underlying problem into one of TNC reveals the broader applicability of our proposed technique. In Section~\ref{sec:applications}, we discuss the important role that the more general problem of approximating TNC plays beyond databases. 
For example, TNCs are used to simulate quantum computers \cite{bridgeman2017hand}; in probability theory, TNC corresponds to marginalization in graphical models \cite{robeva2019duality}; and counting the number of triangles in a graph can be reduced to a TNC.

\subsection{Technical overview}

We present two methods that are $(\epsilon, \delta)$-approximate tensor network contractions (ATNCs) (see Definition~\ref{def:approx-tensor-contraction}).
The error bound of an ATNC is defined relative to the norm of the input tensor, rather than the output of the TNC. 
This is unavoidable as any error bound relative to the output of a TNC would imply a sublinear-space one-way protocol for the \emph{Index} problem, contradicting its $\Omega(n)$ randomized communication lower bound \cite{jayram2008one}.
A similar input-relative error bound is standard for approximate matrix multiplication \cite{woodruff2014sketching}, which is a special case of a TNC\footnote{For matrices $A, B \in \reals^{n \times n}$, the matrix multiplication $AB = \tc(A \times B, \set{(2, 3)})$.}.

\begin{definition}[Approximate tensor network contraction (ATNC)]
    \label{def:approx-tensor-contraction}
    Given $\epsilon, \delta > 0$ and any positive integer $p$, a random function $\hat{\tc}$ is an $(\epsilon, \delta)$-approximate tensor network contraction when, for any tensor network contraction $\tc(\tX, E)$ with $\tX = \tX_1 \times \cdots \times \tX_p$, 
    \begin{align*}
        \Pr\rdbr*{\norm*{\hat{\tc}(\tX, E) - \tc(\tX, E)}_{\frob} \leq \epsilon \norm{\tX}_{\frob}} \geq 1 - \delta.
    \end{align*}
\end{definition}

Our first method enables the approximation of \textit{any} TNC, while prior sketching methods were only able to approximate \textit{acyclic} TNCs \cite{deeds2025data}.
A TNC is \textit{cyclic} when its tensor network contains a cycle, and is \textit{acyclic} otherwise.
Key to this result is the \textit{complement count sketch}, a circularly reversed count sketch, introduced in Section~\ref{sec:general-method}. Each contraction is assigned an independent count sketch and its complement so that the estimation process can use circular convolution of the sketches instead of circular cross-correlation. We show that the use of circular cross-correlation in \cite{heddes2024convolution} prevents it from approximating cyclic TNCs.
We then show an exponential lower bound for the current best methods in Section~\ref{sec:mativation}, motivating the need for a new approach. 

Our second method eliminates the exponential dependence on the number of contractions for acyclic TNCs. To accomplish this, we interpret the tensor network as a tree structure and formulate acyclic TNC as a recursive function, consisting of a series of matrix multiplications with Kronecker products, in Section~\ref{sec:acyclic-method}. The Kronecker products are then sketched using the recursive sketching technique from \cite{ahle2020oblivious}. 
To ensure a space and time-efficient algorithm, the sketches are constructed incrementally from the leaves of the tree up to the root. 
These two methods yield our main result, as stated in Theorem~\ref{thm:errorbound}. We state our main result for \textit{full} TNCs, which are TNCs that result in a scalar value. In Section~\ref{sec:main_result}, we further demonstrate that our methods can also be applied to \textit{partial} TNCs, which result in a tensor of nonzero order.

\begin{restatable}{theorem}{errorbound}
\label{thm:errorbound}
For every $p \in \nats^{+}$, any order-$q_k$ tensors $\tX_k$ for $k \in \sqbr{p}$, and every $\epsilon, \delta > 0$, there exists an $(\epsilon, \delta)$-ATNC of a full TNC that can be computed in time $O((pm \log m + q N) \log 1/\delta)$ using $O(m p \log 1/\delta)$ space, with {\bf (1)} $m = \Omega(t/\epsilon^2)$ in the acyclic setting and {\bf (2)} $m = \Omega(3^t/\epsilon^2)$ in general, where $N = \sum_{k=1}^p \nnz(\tX_k)$, $q = \sum_{k=1}^p q_k$, and $t = \abs{E}$.
\end{restatable}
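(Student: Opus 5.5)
The plan is to build an unbiased estimator $\hat{\tc}$ with controlled second moment, run it at sketch dimension $m$, and then take the median of $O(\log 1/\delta)$ independent copies. For a full TNC the output is a scalar, so it is enough to show $\E[\hat{\tc}] = \tc(\tX,E)$ and $\Var(\hat{\tc}) \le c_t \norm{\tX}_\frob^2 / m$. Here $c_t = O(3^t)$ in general and $c_t = O(t)$ in the acyclic case. Chebyshev with $m = \Omega(c_t/\epsilon^2)$ then gives failure probability at most $1/4$ for a single copy. The median trick with a Chernoff bound lowers this to $\delta$ and accounts for the $\log 1/\delta$ factors in both time and space.

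\textbf{General (cyclic) case.} Give each contraction $(u,v)\in E$ an independent count sketch $(h_e, s_e)$ of dimension $m$. Mode $u$ gets the count sketch itself and mode $v$ gets its complement, i.e.\ the circularly reversed sketch with hash $-h_e \bmod m$. For each tensor $\tX_k$, form its sketch $S_k\in\reals^m$ by summing $\etX_k(\ivi)$ times the product of the signs of its modes into bucket $\sum h(\cdot) \bmod m$. This takes one pass over the nonzeros, so $O(q_k\nnz(\tX_k))$ time and $O(q N)$ total. The estimator is $(S_1 \circledast \cdots \circledast S_p)(0)$, computed with FFTs in $O(pm\log m)$ time and $O(pm)$ space. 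Expanding, each term is $\prod_k \etX_k(\ivi^{(k)})\prod_e s_e(i_u)s_e(i_v)$ times the indicator that $\sum_e (h_e(i_u)-h_e(i_v)) \equiv 0$. The complement is exactly what makes the two endpoints of each edge enter with opposite signs, so every diagonal term with $i_u = i_v$ survives with coefficient $1$. Every off-diagonal term has expectation zero by the $2$-wise independence of the signs. The argument never uses acyclicity, which is the point: cross-correlation would fail on cycles. For the variance, expand the second moment over pairs of index assignments. Since the sketches of different edges are independent, the sum factorizes edge by edge. For each edge, the $4$-wise independent sign products force the four endpoint indices to pair up in one of three patterns: both pairs diagonal, or one of the two cross-pairings, the latter costing a hash collision. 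This gives a bound of the form $\prod_e(1 + 2/m \cdot \ldots)$, which I will bound by Cauchy--Schwarz on each tensor factor. The result is $\Var \le (3^t/m)\norm{\tX}_\frob^2$, and this edge-by-edge counting is where the $3^t$ comes from.

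\textbf{Acyclic case.} Root the tensor network, which is a forest; treat it as a tree. Write the full contraction recursively: the value at a node is a matrix product of the node's tensor, flattened as (parent mode) $\times$ (child modes), with the Kronecker product of its children's vectors. Replace each Kronecker product with the recursive tensor-sketch of \cite{ahle2020oblivious}, which has dimension $m$ and embedding variance growing linearly in the number of factors. Each node is sketched once from the leaves upward, so the total cost is $O(pm\log m + qN)$. The main obstacle is the error accounting. The errors compound along the tree, and I need to show that the variance grows additively in $t$ rather than multiplicatively. The approach I would try first is to bound the total error by a telescoping sum: each of the $t$ edges contributes one approximate-matrix-multiplication step with error $O(\norm{\tX}_\frob^2/m)$. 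This needs the JL-moment property of the recursive sketch (Lemma-level results of \cite{ahle2020oblivious}), with the constant chosen so that multiplying $t$ factors of $(1+O(1/t))$ stays $O(1)$. That gives $m = \Omega(t/\epsilon^2)$.
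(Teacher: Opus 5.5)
Your high-level plan matches the paper's: the same two estimators, unbiasedness plus a variance bound, Chebyshev, and the median of $O(\log 1/\delta)$ copies. There is, however, an error in the general-case variance step and an unexecuted step in the acyclic case.

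\textbf{General case.} The variance argument is wrong as written. Only the sign expectation factorizes over edges. The hash part does not: the estimator reads one coordinate of a circular convolution, so for an index tuple $\ivi$ the collision test is the single congruence $\sum_{e=(u,v)\in E}\rdbr{h_e(i_u)-h_e(i_v)}\equiv 0 \pmod m$, and likewise for $\ivj$. A per-edge factor $1+O(1/m)$ would give variance $O(t/m)\prod_k\norm{\tX_k}_\frob^2$. That is incompatible with the exponential lower bound of Lemma~\ref{lemma:exponential-dependence}, which the paper notes applies to this estimator too. The correct count goes as follows.
\begin{itemize}
\item $4$-wise independence of $s_e$ forces one of the three pairings of $(i_u,i_v,j_u,j_v)$ on each edge, giving $3^t$ patterns.
\item The all-diagonal pattern contributes exactly $\tc(\tX,E)^2$.
\item Every other contributing tuple has some edge with $i_u\neq i_v$ or $j_u\neq j_v$, so its collision probability is at most $1/m$ \emph{once}, not per edge.
\item The absolute sum over one pattern is a full contraction of a doubled network built from the $\abs{\tX_k}$ and their copies. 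Every edge there joins two distinct tensors (this is where WLOG 2 enters). Contracting one tensor at a time and using $\norm{AB}_\frob\le\norm{A}_\frob\norm{B}_\frob$ bounds it by $\prod_k\norm{\tX_k}_\frob^2$.
\end{itemize}
This yields $\Var\le\frac{3^t-1}{m}\prod_k\norm{\tX_k}_\frob^2$. The paper instead rewrites the estimator as $\sum_{\ivi}\prod_k\etX_k(\ivi_k)\anbr{T\bigotimes_e e_{i_u},T\bigotimes_e e_{i_v}}$ with one order-$t$ tensor sketch $T$, and appeals to Lemma~\ref{lemma:tensor-sketch}. Your direct expansion, once repaired, is the explicit version of that reduction. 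It also shows where the product structure of $\tX$ is needed to land on $\prod_k\norm{\tX_k}_\frob^2$.

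\textbf{Acyclic case.} You name the paper's mechanism, but the step that yields $m=\Omega(t/\epsilon^2)$ is only announced, not carried out. Two facts are missing.

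First, the sketches at different nodes are independent and $\E\sqbr{\mat(\tX_l)R_l\tran R_l\hat{g}(l)\mid\hat{g}(l)}=\mat(\tX_l)\hat{g}(l)$. Hence the errors introduced at different nodes are uncorrelated, and variances add rather than errors compounding. The paper implements this with the law of total variance, starting at the root.

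Second, the error created at a node is scaled by the second moment of the already-sketched subtree below it, and this must be bounded recursively. Apply Lemma~\ref{lemma:recursive-sketch} row by row to $\mat(\tX_l)$, use Cauchy--Schwarz for the mean term, and use independence across children. This gives $\E\norm{\hat{g}(k)}_2^2\le\prod_{l\in\Gamma(k)}(1+8/m)^{2(q_l-1)}\norm{\tX_l}_\frob^2\,\E\norm{\hat{g}(l)}_2^2$.

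The sketched orders sum to $q_o+\sum_{k\neq o}(q_k-1)=t$, so these combine to $\Var(y)\le\rdbr{(1+8/m)^{2t}-1}\prod_k\norm{\tX_k}_\frob^2$. The bound $(1+8/m)^{2t}-1\le e^{16t/m}-1\le 32t/m$ for $m\ge16t$ is the precise form of your ``$t$ factors of $1+O(1/t)$''.

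\textbf{Running time.} Your time bound silently relies on the paper's device for never forming the $n\times m$ matrix $\mat(\tX_l)R_l\tran$. The parent's count sketch $C_{k,l}$ is folded in, and $C_{k,l}\mat(\tX_l)R_l\tran r_l$ is accumulated in one pass over the nonzeros of $\tX_l$. This uses the fact that every column of $R_l$ has a single $\pm1$ entry (Algorithm~\ref{alg:sketch-matrix-vector-product}).
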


In Table~\ref{tab:complexity-approx-tensor-contraction}, we compare the time and space complexities of the two proposed methods with the existing approaches. We parameterize the complexity with $N = \sum_{k=1}^p \nnz(\tX_k)$, $q = \sum_{k=1}^p q_k$, and $t = \abs{E}$, which denote the number of non-zero components, modes, and contractions, respectively. Note that the existing sketching methods only support acyclic tensor networks. In this setting, our method brings an exponential improvement in the sketching dimension and, hence, in time and space complexity. Additionally, we present the first sketching method that can approximate general TNCs, including those with cyclic tensor networks.

\begin{table}[h]
    \centering
    \caption{Complexity comparison of approximate full tensor network contraction}
    \label{tab:complexity-approx-tensor-contraction}
    \begin{tabular}{l|cccc}
    \toprule
        Method & Setting & Time & Space & Sketch size ($m$) \\
    \midrule
        \cite{dobra2002processing} & Acyclic & $O((pm + m q N) \log 1/\delta)$ & $O(m p \log 1 / \delta)$ & $\Omega(3^t / \epsilon^2)$ \\
        \cite{heddes2024convolution} & Acyclic & $O((pm \log m + q N)\log 1 / \delta)$ & $O(m p \log 1 / \delta)$ & $\Omega(3^t / \epsilon^2)$ \\
        \textit{Ours} & Acyclic & $O((pm \log m + q N) \log 1 / \delta)$ & $O(m p \log 1 / \delta)$ & $\Omega(t / \epsilon^2)$\\
        \textit{Ours} & General & $O((pm \log m + q N) \log 1 / \delta)$ & $O(m p \log 1 / \delta)$ & $\Omega(3^t / \epsilon^2)$\\
    \bottomrule
    \end{tabular}
\end{table}

\section{Preliminaries}
\label{sec:background}

In this section, we introduce the notation and concepts used throughout the paper. 
We first revisit concepts from tensor algebra and their notation, based on \cite{golub2013matrix}. 

An \textit{order-$q$} tensor $\tA \in\reals^{n_1 \times \cdots \times n_q}$ can be conceptualized as a $q$-dimensional array where the index of the $k$th \textit{mode} ranges from $1$ to $n_k$. A component $\etA(i_1, \dots, i_q)$ of an order-$q$ tensor is specified by $q$ indices, with $1 \leq i_k \leq n_k$, for all $k$ from 1 to $q$. 
With \textit{multi-index} $\ivi = (i_1, \dots, i_q)$, a tensor component is equivalently indexed as $\etA(\ivi) \equiv \etA(i_1, \dots, i_q)$. The all-ones multi-index of length $q$ is denoted by $\ivone_q$, or simply $\ivone$ when its length is clear from the context. 
If $\ivi$ and $\ivj$ are multi-indices of the same length, then $\ivi \leq \ivj$ means that $i_k \leq j_k$ for all $k$.

The squared \textit{Frobenius norm} of $\tA$ is given by $\norm{\tA}_{\frob}^2 = \sum_{\ivi = \ivone}^{\ivn} \etA(\ivi)^2 \equiv \sum_{i_1 = 1}^{n_1} \cdots \sum_{i_q = 1}^{n_q} \etA(i_1, \dots, i_q)^2$,
the number of nonzero components of $\tA$ is denoted by $\nnz(\tA)$,
and the \textit{Hadamard product} is the component-wise multiplication between tensors.

\begin{definition}[Hadamard product]        
\label{def:hadamard-product}
Given any two tensors $\tA, \tB \in \reals^{n_1 \times \cdots \times n_q}$, their Hadamard product $\tA \circ \tB$ is specified by $(\tA \circ \tB)(\ivi) = \etA(\ivi)\etB(\ivi)$ for $\ivone \leq \ivi \leq \ivn$,
where $\tA \circ \tB \in \reals^{n_1 \times \cdots \times n_q}$.
\end{definition}

Besides indexing components of a tensor, it is also possible to extract \textit{subtensors}. A first and second-order extraction from a tensor are called a \textit{fiber} and a \textit{slice}, respectively. For example, if $\tA \in \reals^{2 \times 3 \times 4 \times 5}$ and $\vb = \etA(1, :, 2, 4)$, then $\vb$ is a fiber of $\tA$ with $\evb(i) = \etA(1, i, 2, 4)$. Moreover, if $B = \etA(1, :, 2, :)$, then $\mB$ is a slice of $\tA$ with $\emB(i,j) = \etA(1, i, 2, j)$. In this notation, a colon is used to indicate the extraction of a mode. 

A tensor can be reshaped in various ways. The \textit{vectorization} of a tensor $\tA \in\reals^{n_1 \times \cdots \times n_q}$, denoted as $\vecop(\tA) \in\reals^{n_1 \cdots n_q}$, is a column vector with the components of $\tA$ in lexicographic order. The \textit{mode-$k$ flattening} of $\tA$, given by $\mat_k(\tA) \in \reals^{n_k \times n_1 \cdots n_q / n_k}$, is a matrix whose columns correspond to the mode-$k$ fibers of $\tA$ arranged in lexicographic order. When $k$ is omitted, as in $\mat(\tA)$, it is assumed to be one. 

We continue our discussion with operations specific to first and second-order tensors, better known as vectors and matrices.
We start by defining the \textit{Kronecker product} between two matrices.

\begin{definition}[Kronecker product]
\label{def:kronecker-product}
Given any two matrices $A \in \reals^{m \times n}$ and $B \in \reals^{p\times q}$, their \textit{Kronecker product} $A \otimes B$ is an $(mp)$-by-$(nq)$ matrix with components given by $(A \otimes B)(\alpha, \beta) = A(i, j)B(k, l)$, where $\alpha=p(i-1)+k$ and $\beta=q(j-1)+l$.
\end{definition}

When the Kronecker product is applied to vectors, the vectors are assumed to be column matrices. This means that for vectors $x$ and $y$, $x \otimes y = \vecop(x \times y)$. With $x^{\otimes p} \equiv x \otimes \cdots \otimes x$ we denote the $p$-times Kronecker product of the vector $x$ with itself. A sequence of Kronecker products for every member $u$ of the set $S$ is denoted by $\bigotimes_{u \in S} X_u$. For an empty set, we have that $\bigotimes_{u \in \emptyset} X_u = 1$. 
The \textit{row-wise Kronecker product} applies the Kronecker product to the rows of two matrices. 

\begin{definition}[Row-wise Kronecker product]
\label{def:row-wise-kronecker-product}
Given any two matrices $A \in \reals^{p \times n}$ and $B \in \reals^{p\times m}$, their \textit{row-wise Kronecker product} $A \bullet B$ is a $p$-by-$(nm)$ matrix with components given by $(A \bullet B)(k, \alpha) = A(k, i)B(k, j)$, where $\alpha=m(i-1)+j$.
\end{definition}

For vectors $x, y \in \comps^n$, their inner product is denoted by $\anbr{x, y}$, the component-wise complex conjugate of $x$ is denoted by $\overline{x}$, and $e_i$ is the $i$th standard basis vector.
With $x * y$, we denote their \textit{circular convolution}, and with $x \star y$ their \textit{circular cross-correlation}. 

\begin{definition}[Circular convolution]
\label{def:circ-conv}
Given two vectors $x, y \in \comps^n$, their circular convolution $x * y$ is a vector with components given by $(x * y)_j = \sum_{i=1}^{n} x_i y_{(j-i \bmod n) + 1}$.
\end{definition}

\begin{definition}[Circular cross-correlation]
\label{def:circ-cross}
Given two vectors $x, y \in \comps^n$, their circular cross-correlation $x \star y$ is a vector with components given by $(x \star y)_j = \sum_{i=1}^{n} \overline{x}_i y_{(j+i -2 \bmod n) + 1}$.
\end{definition}

With $I$ we denote the identity matrix, and the matrix $F_n \in \comps^{n \times n}$, or simply $F$ when $n$ is clear from the context, denotes the \textit{discrete Fourier transform} matrix. The products $\dft_n x$ and $\idft_n x$ can be efficiently computed in time $O(n \log n)$ using the \textit{fast Fourier transform}. 
The following lemma relates circular convolution and circular cross-correlation with the Hadamard product.

\begin{lemma}
\label{lemma:conv-thm}
For any vectors $x, y \in \comps^{n}$, $\dft(x * y) = \dft x \circ \dft y$, and $\dft(x \star y) = \overline{\dft x} \circ \dft y$.
\end{lemma}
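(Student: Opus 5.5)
The plan is to prove both identities by direct computation with the DFT matrix. The indexing conventions of Definitions~\ref{def:circ-conv} and \ref{def:circ-cross} are one-based, so I would first switch to zero-based indices to make the modular arithmetic clean. Write $\omega = e^{-2\pi \mathrm{i}/n}$ and $(\dft x)_k = \sum_{j=0}^{n-1} \omega^{jk} x_j$. In zero-based form, Definition~\ref{def:circ-conv} reads $(x*y)_j = \sum_{i} x_i y_{(j-i) \bmod n}$. Definition~\ref{def:circ-cross} reads $(x\star y)_j = \sum_i \overline{x}_i y_{(j+i)\bmod n}$. Checking this translation carefully, including the $-2$ and $+1$ offsets, is the only delicate bookkeeping step. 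I would do it first by substituting $j\mapsto j+1$ and $i\mapsto i+1$.

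For convolution, I would compute
\begin{align*}
(\dft(x*y))_k = \sum_j \omega^{jk}\sum_i x_i y_{(j-i)\bmod n}.
\end{align*}
Next I would exchange the sums and reindex $l=(j-i)\bmod n$. This is a bijection in $j$ for fixed $i$. Because $\omega^n=1$, we have $\omega^{jk}=\omega^{ik}\omega^{lk}$. The sum then factors as $\bigl(\sum_i \omega^{ik}x_i\bigr)\bigl(\sum_l \omega^{lk}y_l\bigr)=(\dft x)_k(\dft y)_k$, which is the first identity.

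For cross-correlation, the same steps apply with $l=(j+i)\bmod n$. This gives $\omega^{jk}=\omega^{-ik}\omega^{lk}$, so the sum becomes $\bigl(\sum_i \omega^{-ik}\overline{x}_i\bigr)(\dft y)_k$. Since $\overline{\omega}=\omega^{-1}$, the first factor equals $\overline{\sum_i \omega^{ik}x_i}=\overline{(\dft x)_k}$, which is the second identity.

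An alternative route for the second identity is to write $x\star y = \tilde{x} * y$, where $\tilde{x}_i=\overline{x}_{(-i)\bmod n}$. One would then check that $\dft\tilde{x}=\overline{\dft x}$ and apply the first part. I expect no real obstacle beyond matching the one-based modular indices to the zero-based form. The result also does not depend on the sign convention of $\dft$, since $\overline{\omega}=\omega^{-1}$ holds for either choice.
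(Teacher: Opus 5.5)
Your proof is correct. The zero-based translations of Definitions~\ref{def:circ-conv} and~\ref{def:circ-cross} to $\sum_i x_i y_{(j-i)\bmod n}$ and $\sum_i \overline{x}_i y_{(j+i)\bmod n}$ check out. The reindexing, together with $\omega^n = 1$ and $\overline{\omega} = \omega^{-1}$, then yields both identities. The paper gives no proof of this lemma and simply quotes it as the standard convolution theorem, so your direct DFT computation is the textbook argument it relies on.

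Two small remarks. First, the identities depend on $\dft$ being unnormalized, which is the paper's convention. The proof of Lemma~\ref{lemma:complement-count-sketch} writes $(\dft x)_k = \sum_j e^{-\mathrm{i}2\pi(k-1)(j-1)/m}x_j$, and $\dft$ is later described as $\sqrt{m}$ times a unitary matrix. Under the unitary normalization a factor $\sqrt{n}$ would appear. So your choice of convention is the right one, and, as you say, only the sign of the exponent is irrelevant.

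Second, your alternative route via $\tilde{x}_i = \overline{x}_{(-i)\bmod n}$ is exactly the mechanism behind Lemma~\ref{lemma:complement-count-sketch}. There the complement count sketch $C'$ is the circular reversal of $C$, and because $Cx$ is real, reversal alone corresponds to conjugation in the Fourier domain. This is what lets the paper use convolution where the earlier method needed cross-correlation.
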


\subsection{Count sketch}
\label{sec:count-sketch}

The \textit{count sketch} (see Definition~\ref{def:count-sketch}) was introduced to find frequent items in data streams \citep{charikar2002finding}. It has since been used more broadly as a dimensionality reduction technique \cite{woodruff2014sketching}. Each column of the count sketch matrix has a single random sign in a random row. The sign is determined by a 4-wise independent hash function (see Definition~\ref{def:kwisehash}) and the row is determined by a 2-wise independent hash function. The count sketch of a vector $x$ can be efficiently computed in time $O(\nnz(x))$ if $x$ is represented as a list of its nonzero components.

\begin{definition}[$k$-wise independence]
\label{def:kwisehash}
A family of hash functions $H = \set{h: \sqbr{n} \to \sqbr{m}}$ is said to be $k$-wise independent if for any $k$ distinct items $x_1, \dots, x_k \in \sqbr{n}$ the hashed values $h(x_1), \dots, h(x_k)$ are independent and uniformly distributed in $\sqbr{m}$.
\end{definition}

\begin{definition}[Count sketch]
\label{def:count-sketch}
The count sketch matrix is a random matrix $C \in \reals^{m \times n}$, with each component specified as $C(j,i) = s(i)\sqbr{h(i) = j}$, where $s:\sqbr{n}\to\set{-1, 1}$ and $h:\sqbr{n}\to\sqbr{m}$ are drawn from families of 4 and 2-wise independent hash functions, respectively. Given any vector $x \in \mathbb{R}^n$, the count sketch of $x$ is obtained by $Cx$.
\end{definition}

\begin{lemma}[{\citep[Lemma 2]{weinberger2009feature}}]
\label{lemma:count-sketch}
For any positive integers $m, n$, let $C \in \reals^{m \times n}$ be a count sketch matrix. Then, for any vectors $x, y \in \reals^n$,
\begin{align*}
\E\sqbr*{\rdbr{C x}\tran \rdbr{C y}} = x\tran y \quad\text{and}\quad\Var\rdbr*{\rdbr{C x}\tran \rdbr{C y}} \leq \frac{2}{m} \norm{x}^2_2\norm{y}^2_2.
\end{align*}
\end{lemma}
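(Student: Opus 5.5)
The plan is a direct second-moment computation on the bilinear form
\[
Z = (Cx)\tran(Cy) = \sum_{j=1}^m \sum_{i,k=1}^n C(j,i)C(j,k)\, x_i y_k = \sum_{i,k=1}^n s(i)s(k)\sqbr{h(i)=h(k)}\, x_i y_k .
\]
I will split $Z$ into its diagonal part and an off-diagonal part. Since $s(i)^2=1$, the diagonal terms $i=k$ contribute exactly $x\tran y$. The off-diagonal part is
\[
W=\sum_{i\neq k} s(i)s(k)\,\delta_{ik}\, x_i y_k, \qquad \delta_{ik}=\sqbr{h(i)=h(k)} .
\]
Throughout I take the hash functions $s$ and $h$ to be drawn independently. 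For the expectation, fix $i\neq k$. Because $s$ is $4$-wise (hence $2$-wise) independent with uniform signs, $\E[s(i)s(k)]=0$. By independence of $s$ and $h$, every term of $W$ then has zero mean, so $\E[Z]=x\tran y$.

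For the variance, note that $\Var(Z)=\E[W^2]$, since $x\tran y$ is deterministic. Expanding $W^2$ gives a sum over pairs $(i,k)$ and $(i',k')$, each with $i\neq k$ and $i'\neq k'$, of
\[
\E\sqbr{s(i)s(k)s(i')s(k')}\;\E\sqbr{\delta_{ik}\delta_{i'k'}}\; x_i y_k x_{i'} y_{k'} .
\]
The key step is the sign expectation. By $4$-wise independence of $s$, the four signs involve at most four distinct items and are jointly independent. Their product therefore has nonzero expectation (equal to $1$) only when every index appears an even number of times, that is, only when $\set{i',k'}=\set{i,k}$. For such surviving terms, $\delta_{ik}\delta_{i'k'}=\delta_{ik}$. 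Two-wise independence of $h$ then gives $\Pr[h(i)=h(k)]=1/m$ for $i\neq k$.

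Collecting the two cases $(i',k')=(i,k)$ and $(i',k')=(k,i)$ yields
\[
\Var(Z)=\frac1m\sum_{i\neq k}\rdbr*{x_i^2y_k^2 + x_iy_ix_ky_k}
=\frac1m\rdbr*{\norm{x}_2^2\norm{y}_2^2 + (x\tran y)^2 - 2\sum_i x_i^2y_i^2}.
\]
Dropping the nonpositive last term and applying Cauchy--Schwarz, $(x\tran y)^2\le\norm{x}_2^2\norm{y}_2^2$, gives the claimed bound $\frac{2}{m}\norm{x}_2^2\norm{y}_2^2$.

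The only delicate point is the bookkeeping in the fourth-moment expansion: checking that exactly the two index matchings survive, and that no diagonal-times-off-diagonal cross terms remain. The latter holds because $W$ has no diagonal terms by construction. This is precisely where $4$-wise independence of $s$ is needed; $2$-wise independence would suffice for the mean.
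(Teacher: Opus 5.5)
Your proof is correct: the mean computation, the pairing argument for the fourth sign moment, and the final Cauchy--Schwarz step all check out.

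The paper gives no proof of this lemma; it imports it from Weinberger et al. Your diagonal/off-diagonal second-moment computation is essentially the argument of that source and gives the same exact variance $\frac{1}{m}\sum_{i\neq k}(x_i^2y_k^2 + x_iy_ix_ky_k)$. Your closing step (dropping $-2\sum_i x_i^2y_i^2$ and applying Cauchy--Schwarz) is what turns this into the explicit $\frac{2}{m}\norm{x}_2^2\norm{y}_2^2$ stated here.

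Your argument also uses only $4$-wise independence of $s$, $2$-wise independence of $h$, and independence between $s$ and $h$. That matches the count sketch of Definition~\ref{def:count-sketch}, where the independence of $s$ and $h$ is left implicit. The cited source instead works with fully random hashes.
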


\subsection{Tensor sketch}
\label{sec:tensor-sketch}

The \textit{tensor sketch} (see Definition~\ref{def:tensor-sketch}) was a pioneer in introducing a version of the count sketch that can be used to efficiently sketch tensor products. Its insights came forth from work on compressed matrix multiplication \citep{pagh2013compressed} applied to approximate the polynomial kernel $\anbr{x, y}^q = \anbr{x^{\otimes q}, y^{\otimes q}}$ for any vectors $x$ and $y$, in order to accelerate kernel machines \citep{pham2013fast}. 

\begin{definition}[Tensor sketch]
\label{def:tensor-sketch}
The tensor sketch matrix $T \in \reals^{m \times n^q}$ of order $q$ is a random matrix defined as $T = \idft((\dft C_1) \bullet (\dft C_2) \bullet \cdots \bullet (\dft C_q))$,
where $C_1, \dots, C_q \in \reals^{m \times n}$ are independent count sketch matrices. Given any vector $x \in \mathbb{R}^{n^q}$, the tensor sketch of $x$ is obtained by $Tx$.
\end{definition}

With tensor sketch matrix $T \in \reals^{m \times n^q}$ and a vector $x \in \mathbb{R}^{n^q}$, the sketch $Tx$ can be computed in time $O(q \nnz(x))$. However, given a vector $y \in \mathbb{R}^n$, the sketch $T(y^{\otimes q})$ can be computed in time $O(qm \log m + q \nnz(y))$ with the fast Fourier transform. This can be significantly more efficient than first computing $y^{\otimes q}$ and then applying the tensor sketch.

\begin{lemma}[{\citep[Lemma 2]{avron2014subspace}}]
\label{lemma:tensor-sketch}
    For any positive integers $m, n, q$, let $T \in \reals^{m \times n^q}$ be a tensor sketch matrix of order $q$. Then, for any vectors $x, y \in \reals^{n^q}$,
    \begin{align*}
    \E\sqbr*{\rdbr{T x}\tran \rdbr{T y}} = x\tran y \quad\text{and}\quad\Var\rdbr*{\rdbr{T x}\tran \rdbr{T y}} \leq \frac{3^q}{m}\norm{x}^2_2\norm{y}^2_2.
    \end{align*}
\end{lemma}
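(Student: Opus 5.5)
The statement to prove is the last result stated above, Lemma~\ref{lemma:tensor-sketch}: unbiasedness of the tensor sketch inner product estimator and the variance bound $3^q/m$. I would prove it directly from the structure of $T$.

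\textbf{Reduction to hash functions.} First I would show that $T$ is itself a count-sketch-like matrix. By Lemma~\ref{lemma:conv-thm}, applying $\idft$ to a Hadamard product of DFTs gives a circular convolution. Hence column $(i_1,\dots,i_q)$ of $T$ is the circular convolution of the columns $C_1(:,i_1), \dots, C_q(:,i_q)$. Each of these columns is a signed standard basis vector, so their convolution is again a single signed basis vector. Concretely,
\[
T(j,\ivi) = S(\ivi)\,\sqbr{H(\ivi)=j}, \qquad S(\ivi)=\prod_{k} s_k(i_k), \qquad H(\ivi)=\Big(\sum_k (h_k(i_k)-1) \bmod m\Big)+1.
\]
So $(Tx)\tran(Ty)=\sum_{\ivi,\ivj} S(\ivi)S(\ivj)\,x_{\ivi}y_{\ivj}\,\sqbr{H(\ivi)=H(\ivj)}$.

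\textbf{Expectation.} For $\ivi\neq\ivj$, some coordinate $k$ has $i_k\neq j_k$. The factor $s_k(i_k)s_k(j_k)$ has mean zero by pairwise independence of $s_k$, and it is independent of all the other factors. Diagonal terms contribute $x_{\ivi}y_{\ivi}$. This gives $\E=x\tran y$.

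\textbf{Variance.} This is the main computation. I would expand the second moment as a sum over quadruples $(\ivi,\ivj,\ivi',\ivj')$ of
\[
x_{\ivi}y_{\ivj}x_{\ivi'}y_{\ivj'}\;\E\big[S(\ivi)S(\ivj)S(\ivi')S(\ivj')\big]\;\E\big[\sqbr{H(\ivi)=H(\ivj)}\,\sqbr{H(\ivi')=H(\ivj')}\big].
\]
Signs and hashes are independent, so the expectation factors as written.

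\emph{Sign factor.} By 4-wise independence of each $s_k$, the sign expectation factorizes over coordinates. For each $k$ it is nonzero only if the four values $i_k,j_k,i'_k,j'_k$ pair up. There are three pairings: (a) $i_k=j_k,\ i'_k=j'_k$; (b) $i_k=i'_k,\ j_k=j'_k$; (c) $i_k=j'_k,\ j_k=i'_k$.

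\emph{Collision factor.} Fix a choice of pairing type for every coordinate $k$. If all coordinates are type (a), the collision probability is $1$; these terms give exactly $(x\tran y)^2$, which the subtraction of the squared mean removes. Otherwise some coordinate of type (b) or (c) has distinct values, so $\ivi\ne\ivj$ there. Then $H(\ivi)-H(\ivj)$ contains the term $h_k(i_k)-h_k(j_k)$, which is uniform mod $m$ by 2-wise independence of $h_k$. Hence the collision probability is at most $1/m$.

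\emph{Bounding the remaining terms.} For a fixed pattern of types, I would bound the weighted sum over indices by Cauchy–Schwarz, aiming for at most $\norm{x}_2^2\norm{y}_2^2$. For example, all-(b) gives $\norm{x}^2\norm{y}^2$, and all-(c) gives $(x\tran y)^2 \le \norm{x}^2\norm{y}^2$. Mixed patterns reshape $x$ and $y$ into matrices indexed by the coordinate groups. Each mixed pattern then becomes a trace of products of these matrices, and Cauchy–Schwarz bounds it by $\norm{x}_2^2\norm{y}_2^2$. There are at most $3^q$ patterns, which gives $\Var\le \frac{3^q}{m}\norm{x}_2^2\norm{y}_2^2$.

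\textbf{Expected obstacle.} The delicate step is the bookkeeping when a coordinate matches more than one pairing type, e.g.\ when all four indices are equal. These terms must not be overcounted, and their sign must be controlled. I would handle this by inclusion–exclusion, or by noting that the overlapping terms are nonnegative contributions already dominated by the Cauchy–Schwarz bound. The constant $3$ absorbs these cases.
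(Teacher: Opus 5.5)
The paper does not prove this lemma; it imports it from Avron, Nguyen and Woodruff. Your direct fourth-moment computation is the standard self-contained argument, and its skeleton is sound. The following pieces are all correct:
\begin{itemize}
\item the identification $T(j,\ivi)=S(\ivi)\sqbr{H(\ivi)=j}$ via Lemma~\ref{lemma:conv-thm};
\item the unbiasedness argument;
\item the three pairings per coordinate forced by 4-wise independence of the signs;
\item the exact cancellation of the all-(a) quadruples against $(x\tran y)^2$;
\item the bound $1/m$ on the collision probability from 2-wise independence of a single $h_k$.
\end{itemize}
Compared with the citation, your route buys self-containment. It also gives the slightly sharper constant $(3^q-1)/m$, which recovers the $2/m$ of Lemma~\ref{lemma:count-sketch} at $q=1$.

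One step needs repair, and the fix you suggest for it does not work as stated. The sign issue is not confined to coordinates where all four indices coincide. In every mixed pattern the weights $w=x_{\ivi}y_{\ivj}x_{\ivi'}y_{\ivj'}$ can be negative \emph{and} the collision probabilities $p$ vary, so you cannot pull a common factor $1/m$ out of a signed sum.

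For example, take $q=2$ and pattern (b,c) with distinct values in both coordinates. Write $D_1=h_1(i_1)-h_1(j_1)$ and $D_2=h_2(i_2)-h_2(j_2)$, which are independent and uniform mod $m$. The two collision events become $D_1+D_2\equiv 0$ and $D_1-D_2\equiv 0 \pmod m$. Their joint probability is $1/m^2$ or $2/m^2$ depending on the parity of $m$. By contrast, quadruples of the same pattern with a degenerate second coordinate collide with probability $1/m$.

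Your remark that overlapping terms are nonnegative is also false. With the first coordinate all equal to $a$ and the second of type (c) with $b\neq b'$, the term is $x_{ab}y_{ab'}x_{ab'}y_{ab}$, which can be negative.

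The clean fix is to bound each term by $\abs{w}\,p\le\abs{w}/m$ at the outset. Equivalently, run the pattern sums with the entrywise absolute values of $x$ and $y$, which have the same norms. All summands are then nonnegative, so counting a quadruple under every pattern it matches merely overcounts, and no inclusion--exclusion is needed.

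Now take a pattern with coordinate groups $A,B,C$ of types (a),(b),(c). Let $X_\beta$ (resp.\ $Y_{\beta'}$) be the $A$-by-$C$ matrix obtained from the absolute values of $x$ (resp.\ $y$) by fixing the $B$-coordinates to $\beta$ (resp.\ $\beta'$). The pattern sum equals $\sum_{\beta,\beta'}\tr\bigl((X_\beta Y_{\beta'}\tran)^2\bigr)$. By Cauchy--Schwarz and submultiplicativity this is at most $\sum_{\beta,\beta'}\norm{X_\beta}_{\frob}^2\norm{Y_{\beta'}}_{\frob}^2=\norm{x}_2^2\norm{y}_2^2$. This also makes precise your "trace of products" step, which for three nonempty groups is a sum of traces over the $B$-indices. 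Summing over the $3^q-1$ patterns other than all-(a) gives the claimed variance bound.
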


\subsection{Recursive sketch}
\label{sec:recursive-sketch}

The \textit{recursive sketch} (see Definition~\ref{def:recursive-sketch}) improves on a limitation of the tensor sketch, whose variance has an exponential dependence on the order of the tensor. The main insight is that the exponential dependence can be avoided when the modes of a tensor are sketched recursively, following a binary tree. Multiple sketches are presented in \cite{ahle2020oblivious} which can be used to perform the recursion. We use a recursion of tensor sketches in our definition below. 
For a given sketch size, the recursive sketch only increases the time complexity of the tensor sketch by a constant factor, while offering an exponential improvement in approximation accuracy.

\begin{definition}[Recursive sketch]
\label{def:recursive-sketch} 
The recursive sketch matrix $R \in \reals^{m \times n^q}$ of order $q$, with $q$ a power of two\footnote{
For notational simplicity, we allow $q$ to be any positive integer. The actual order is implied to be the next power of two $\tilde{q}$, and the tensors in $\reals^{n^q}$ are embedded into $\reals^{n^{\tilde q}}$ through Kronecker products with $e_1$, as described in \cite{ahle2020oblivious}. Importantly, this preserves the variance bound and does not affect the asymptotic complexity.
}, is a random matrix defined as $R = Q_2 Q_4 \cdots Q_{q/2} Q_q S$, where $S = C_1 \otimes \cdots \otimes C_q$ and $Q_l = T_{l,1} \otimes T_{l,2} \otimes \cdots \otimes T_{l,l/2}$, with independent count sketch matrices $C_1, \dots, C_q \in \reals^{m \times n}$, and independent tensor sketch matrices $T_{l,1}, \dots, T_{l, l/2} \in \reals^{m \times m^2}$ for $l \in \set{2, 4, \dots, q/2, q}$.
Given any vector $x \in \reals^{n^q}$, the recursive sketch of $x$ is obtained by $Rx$.
\end{definition}

\begin{lemma}[{\citep[Lemmas 12 and 15]{ahle2020oblivious}}]
\label{lemma:recursive-sketch}
    For any positive integers $m, n, q$, let $R \in \reals^{m \times n^q}$ be a recursive sketch matrix of order $q$. Then, for any vectors $x, y \in \reals^{n^q}$,
    \begin{align*}
    \E\sqbr*{\rdbr{R x}\tran \rdbr{R y}} = x\tran y \quad\text{and}\quad\Var\rdbr*{\rdbr{R x}\tran \rdbr{R y}} \leq \rdbr*{\!\rdbr*{1 + \frac{8}{m}}^{2q} - 1}\norm{x}^2_2\norm{y}^2_2.
    \end{align*}
\end{lemma}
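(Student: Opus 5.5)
The plan is to reduce both claims of Lemma~\ref{lemma:recursive-sketch} to a statement about a single vector, namely a fourth-moment bound $\E\norm{Rz}_2^4 \leq (1+8/m)^{2q}\norm{z}_2^4$ together with $\E\norm{Rz}_2^2 = \norm{z}_2^2$. The inner-product statement then follows by polarization. I would show that this fourth-moment bound is multiplicative over Kronecker products and over compositions of independent sketches. Once that is in place, it remains to count the elementary sketches in $R$ and bound each one.

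\textbf{Unbiasedness.} Each elementary sketch $M$ (a count sketch $C_i$ or a tensor sketch $T_{l,j}$) satisfies $\E[M\tran M] = I$, by Lemmas~\ref{lemma:count-sketch} and~\ref{lemma:tensor-sketch}. For independent factors, $\E[(A\otimes B)\tran(A\otimes B)] = \E[A\tran A]\otimes\E[B\tran B] = I$. For a composition, conditioning on the inner factor gives $\E[(PQ)\tran PQ] = \E[Q\tran \E[P\tran P] Q] = I$. Hence $\E[R\tran R]=I$, and $\E[(Rx)\tran(Ry)] = x\tran y$.

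\textbf{Multiplicativity of the fourth moment.} Say $M$ has constant $\alpha_M$ if $\E\norm{Mz}^4 \leq \alpha_M\norm{z}^4$ for all $z$.
\begin{itemize}
\item \emph{Composition.} Conditioning on $Q$ immediately gives $\alpha_{PQ}\leq\alpha_P\alpha_Q$.
\item \emph{Kronecker products.} Write $A\otimes B = (A\otimes I)(I\otimes B)$. Reshape $w$ into a matrix $W$ with columns $w_j$, so that $\norm{(A\otimes I)w}^2 = \sum_j \norm{Aw_j}^2$. Minkowski's inequality in $L^2$ then gives
$\E\bigl(\sum_j\norm{Aw_j}^2\bigr)^2 \leq \bigl(\sum_j (\E\norm{Aw_j}^4)^{1/2}\bigr)^2 \leq \alpha_A\norm{w}^4$.
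The factor $I\otimes B$ is handled the same way, acting on the rows. So $\alpha_{A\otimes B}\leq\alpha_A\alpha_B$.
\end{itemize}
The key point is that this argument never assumes $z$ is rank-one, which is exactly what the recursion needs.

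\textbf{Counting factors.} $S$ contributes $q$ count sketches, and the layers $Q_2,\dots,Q_q$ contribute $q/2+q/4+\cdots+1 = q-1$ order-2 tensor sketches. If every elementary sketch has $\alpha\leq 1+8/m$, then $\alpha_R \leq (1+8/m)^{2q-1}\leq(1+8/m)^{2q}$. This gives $\Var\norm{Rz}^2 \leq ((1+8/m)^{2q}-1)\norm{z}^4$.

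\textbf{Polarization.} By homogeneity, take $\norm{x}=\norm{y}=1$. Write $(Rx)\tran(Ry) = \tfrac14\bigl(\norm{R(x+y)}^2 - \norm{R(x-y)}^2\bigr)$ and set $\varepsilon = (1+8/m)^{2q}-1$. Using $\Var(a-b)\leq(\sqrt{\Var a}+\sqrt{\Var b})^2$,
\[
\Var\bigl((Rx)\tran(Ry)\bigr) \leq \tfrac{\varepsilon}{16}\bigl(\norm{x+y}^2+\norm{x-y}^2\bigr)^2 = \varepsilon,
\]
which is the claimed variance bound.

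\textbf{Main obstacle.} The hard step is the base case: showing $\alpha\leq 1+8/m$ for each elementary sketch, for \emph{arbitrary} inputs. For the count sketch, Lemma~\ref{lemma:count-sketch} with $x=y$ gives $\E\norm{Cz}^4 \leq (1+2/m)\norm{z}^4$, which is enough. For the order-2 tensor sketch $T\in\reals^{m\times m^2}$ acting on a general (non-rank-one) $z\in\reals^{m^2}$, Lemma~\ref{lemma:tensor-sketch} gives only $1+9/m$. I would first try a direct expansion of $\E\norm{Tz}^4$ using the 4-wise independence of the signs and the 2-wise independence of the hashes. The aim is to show that the surviving index pairings contribute at most $(8/m)\norm{z}^4$ beyond $\norm{z}^4$, following the finer case analysis of \cite{ahle2020oblivious}. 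If that fails, the same argument still yields the lemma with $8$ replaced by $9$, which changes only constants.
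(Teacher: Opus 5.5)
Your argument goes through, and the step you single out as the main obstacle is not actually needed: you never have to show $\alpha_T \le 1 + 8/m$ for the order-2 tensor sketch. Lemma~\ref{lemma:count-sketch} gives $\alpha_C \le 1 + 2/m$. Lemma~\ref{lemma:tensor-sketch} with $q=2$ gives $\alpha_T \le 1 + 9/m$, and it holds for arbitrary, non-rank-one inputs in $\reals^{m^2}$, which is what the recursion needs. Pair each of the $q-1$ tensor sketches with one of the $q$ count sketches:
\[
\Bigl(1+\frac{2}{m}\Bigr)\Bigl(1+\frac{9}{m}\Bigr) = 1 + \frac{11}{m} + \frac{18}{m^2} \le 1 + \frac{16}{m} + \frac{64}{m^2} = \Bigl(1+\frac{8}{m}\Bigr)^2 .
\]
Bound the leftover count sketch by $1 + 2/m \le (1+8/m)^2$. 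This gives $\alpha_R \le (1+2/m)^q(1+9/m)^{q-1} \le (1+8/m)^{2q}$, which is exactly the stated constant. Your fallback of replacing $8$ by $9$ would not have proved the lemma as stated, but with this bookkeeping it is unnecessary. The rest is sound. $\E[M\tran M]=I$ propagates through independent Kronecker factors and compositions. Your Minkowski bound $\E\|(A\otimes I)w\|^4 \le \alpha_A\|w\|^4$ is correct and uses no rank-one structure. Polarization is lossless, since $\|x+y\|^2+\|x-y\|^2 = 4$ for unit vectors.

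As for the route: the paper gives no proof of this lemma; it imports it from \citep[Lemmas 12 and 15]{ahle2020oblivious}. Yours is a self-contained derivation from the two sketch lemmas already stated in the paper, plus Minkowski's inequality for Kronecker products with the identity and conditioning for compositions. It shows that only the crude per-factor bounds are needed, and polarization yields the inner-product form that the paper later uses. The citation buys brevity and, via the footnote, coverage of $q$ that is not a power of two. Your count of $q$ count sketches and $q-1$ tensor sketches assumes a power of two. To cover the padded case you would check three things. First, factors acting only on the $e_1$ padding map signed basis vectors to signed basis vectors, so they contribute nothing. Second, the at most $L \le \lceil \log_2 q\rceil$ mixed nodes act as cyclically shifted count sketches. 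Third, using $1+2/m \le (1+8/m)^{1/2}$ and $1+9/m \le (1+8/m)^{9/8}$ (Bernoulli), the total exponent $\frac{q+L}{2} + \frac{9(q-1)}{8}$ is at most $2q$, because $L \le \frac{3q+9}{4}$.
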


\section{Related work}
\label{sec:related-work}

In this section, we review related works and compare them with our contributions.

\textbf{Sketch-based tensor network contraction.}
A generalization of the AMS sketch to enable the join size estimation of multi-join queries was proposed in \cite{dobra2002processing}. Although not explicitly mentioned in \cite{dobra2002processing}, the problem of join size estimation is equivalent to the problem of approximating a full TNC (see Section~\ref{sec:card-est}).
Their approach involves sketching an order-$q$ tensor using $q$ independent AMS sketch matrices. 
Once all the tensors are sketched, the approximation is obtained by performing the Hadamard product of the sketches, followed by an averaging of their components. 
Since the AMS sketch matrix is dense, computing the $m$-dimensional sketch of an order-$q$ tensor $\tX$ requires $O(mq \nnz(\tX))$ time. 
The method proposed in \cite{heddes2024convolution} improves this by a factor $m$ by using sparse count sketch matrices to sketch the tensors.
Both methods only support acyclic tensor networks and have a space and time complexity that depends exponentially on the number of contractions (see Table~\ref{tab:complexity-approx-tensor-contraction} for a comparison with our methods).

\textbf{Marginalization.}
Robeva and Seigal~\cite{robeva2019duality} established that marginalization in undirected graphical models with discrete variables is equivalent to TNC. Consequently, classical inference algorithms for graphical models can be viewed as contraction algorithms for tensor networks. This includes exact methods such as belief propagation on acyclic tensor networks and the junction tree algorithm for cyclic tensor networks~\cite{wainwright2008graphical}. As shown in \cite{wainwright2008graphical} and \cite{markov2008simulating}, the running time of such exact algorithms is exponential in the treewidth of the tensor network. To mitigate this complexity, approximate methods such as loopy belief propagation~\cite{wainwright2008graphical} are often employed in practice, though they typically lack worst-case approximation guarantees.

\textbf{Tensor decomposition.}
Tensor decomposition methods aim to factor a given tensor into a tensor network, with the specific network structure determined by the chosen decomposition (e.g., CP, Tucker, or tensor train). This differs from the TNC problem studied in this paper, where the tensor network is given and the goal is to efficiently compute its contraction. A substantial body of work studies the use of randomized algorithms and sketching to accelerate tensor decompositions. For instance, \cite{wang2015fast} proposes algorithms for CP decomposition that use sketches to speed up tensor power iterations. In \cite{malik2018low} and \cite{ma2021fast} sketched alternating least squares methods were presented for low-rank Tucker decomposition. More recently, \cite{mahankali2024near} presented algorithms for tensor train, Tucker, and CP decompositions, as well as decompositions into arbitrary tensor networks. While these works demonstrate the effectiveness of sketching for tensor decomposition, they do not address the problem studied in this paper.

\section{Methods}
\label{sec:method}

We propose two methods that are $(\epsilon, \delta)$-approximate tensor network contractions. We first describe our methods for full TNCs and later show that our methods also approximate partial TNCs, which result in a tensor of nonzero order.

After establishing four simplifying assumptions (without loss of generality), we start this section by describing our first method, which can approximate any TNC. 
We then demonstrate the need for a new sketching technique by showing that the variance of the current best method requires an exponential dependence on the number of contractions. Motivated by this, we present a recursive expression for acyclic full TNC that serves as a foundation for the introduction of our second method, which has a polynomial dependence on the number of contractions.

\subsection{Without loss of generality assumptions}
\label{sec:wlog}

To simplify our explanations and notation going forward, we make the following four assumptions about TNC, without loss of generality (WLOG):

\noindent{\bf WLOG 1.} Every mode is part of exactly one contraction. Otherwise, \textit{virtual copies} of the modes participating in multiple contractions can be created. Consider the following TNC of vectors $a, b, c$: $\tc(a \times b \times c, \set{(1, 2), (2, 3)})$, where the single mode of $b$ is part of two contractions. Defining $B(i, j) = b(i) \sqbr{i = j}$, this can instead be written as $\tc(a \times B \times c, \set{(1, 2), (3, 4)})$.

\noindent{\bf WLOG 2.} Every contraction involves two tensors. When the modes being contracted belong to the same tensor, such as in the trace of a matrix, the exact contraction can trivially be computed in time proportional to the number of non-zero components of the tensor.

\noindent{\bf WLOG 3.} At most a single contraction exists between any two tensors. Multiple contractions can be merged into one larger contraction by reshaping the tensors. The following example of matrices $A$ and $B$: $\tc(A \times B, \set{(1, 3), (2, 4)})$ can be expressed as $\tc(\vecop(A) \times \vecop(B), \set{(1, 2)})$.

\noindent{\bf WLOG 4.} Every mode has the same number of components. The methods discussed in this paper depend on the number of nonzero components of the tensors; therefore, zeros can be appended to smaller tensors to ensure that all modes have the same number of components.

It is easy to verify that these transformations do not increase the number of nonzero components or the Frobenius norm of the transformed tensors. Additionally, they do not introduce any extra tensors or contractions. These are sufficient conditions for our results to hold.

\subsection{Approximating any tensor network contraction}
\label{sec:general-method}

The insights for our first method stem from an observation of the method in \cite{heddes2024convolution}; namely, that the use of circular cross-correlation to combine sketches results in a pattern where each contraction has one conjugated mode when the TNC is acyclic.
This pattern arises because circular cross-correlation conjugates its first argument when expressed as a Hadamard product (see Lemma~\ref{lemma:conv-thm}),
resulting in a layering of complex conjugates that cancel out when their number is even. 
The tensors in every other level of the contraction tree therefore have their modes conjugated, ensuring that one mode of every contraction is conjugated. However, this relies on the tensor network having a tree structure and therefore does not work for cyclic tensor networks.

\begin{figure}[h]
    \centering
    \scalebox{1}{\begin{tikzpicture}[thick, main node/.style={circle, fill=blue!30, draw, minimum size=0.75cm, inner sep=0pt}, every node/.style={line width=0.4mm}, every path/.style={line width=0.4mm}]
    
    \node[main node] (1) at (0, 0) {$x$};
    \node[main node] (2) at (2cm, 0) {$Y$};
    \node[main node] (3) at (4cm, 0) {$z$};
    
    \node[main node] (4) at (8cm, -0.866cm) {$X$};
    \node[main node] (5) at (10cm, -0.866cm) {$Z$};
    \node[main node] (6) at (9cm, 0.866cm) {$Y$};

    \draw[-] (1) -- (2) node [pos=0.10, above=-1pt] {1} node [pos=0.90, above=-1pt] {2};
    \draw[-] (2) -- (3) node [pos=0.10, above=-1pt] {3} node [pos=0.90, above=-1pt] {4};
    
    \draw[-] (4) -- (5) node [pos=0.10, below=-1pt] {1} node [pos=0.90, below=-1pt] {6};
    \draw[-] (5) -- (6) node [pos=0.25, right=-1pt] {5} node [pos=0.90, right=-1pt] {4};
    \draw[-] (4) -- (6) node [pos=0.25, left=-1pt] {2} node [pos=0.90, left=-1pt] {3};
    
\end{tikzpicture}}
    \caption{Tensor diagram of a simple acyclic (left) and cyclic (right) tensor network}
    \label{fig:acycle-v-cycle}
\end{figure}
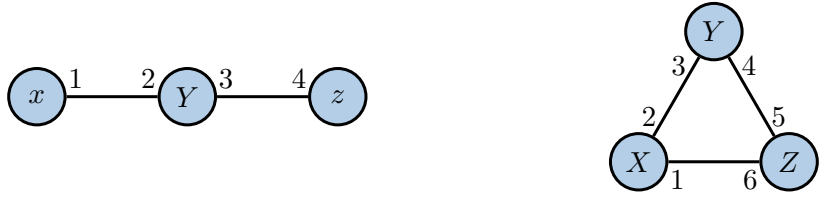

The following two examples of simple acyclic and cyclic TNCs (see Figure~\ref{fig:acycle-v-cycle}) show that each contraction in the acyclic TNC (Example~\ref{example:prior-method-acyclic}) has one conjugated mode (both are part of tensor $Y$), while contraction $(1, 6)$ of the cyclic TNC (Example~\ref{example:prior-method-cyclic}) has no conjugated mode. We emphasize this difference because our proof for the general setting (presented in Appendix~\ref{sec:proof-lemma-general-expectation-variance}) relies on each contraction having one conjugated mode.

\begin{example}
    \label{example:prior-method-acyclic}
    With the method in \cite{heddes2024convolution}, the acyclic TNC $\tc(x \times Y \times z, \set{(1, 2), (3, 4)})$ (shown on the left of Figure~\ref{fig:acycle-v-cycle}) with $x, z \in \reals^n$ and $Y \in \reals^{n \times n}$ is approximated as follows. First, contractions $(1, 2)$ and $(3, 4)$ are assigned independent count sketch matrices $C_1$ and $C_2$, respectively. The count sketches $C_1x$ and $C_2z$ and tensor sketch $\idft(\dft C_1 \bullet \dft C_2)\vecop(Y)$ are then computed and combined using circular cross-correlation. Finally, the first component of the resulting vector is selected, yielding the following expression:
\begin{align*}
    e_1\tran (C_1x \star \idft(\dft C_1 \bullet \dft C_2)\vecop(Y) \star C_2z)
    = e_1\tran \idft (\dft C_1x \circ \overline{(\dft C_1 \bullet \dft C_2)\vecop(Y)} \circ \dft C_2z ).
\end{align*}
\end{example}

\begin{example}
    \label{example:prior-method-cyclic}
    With the method in \cite{heddes2024convolution}, the cyclic TNC $\tc(X \times Y \times Z, \set{(2, 3), (4, 5), (1, 6)})$ (shown on the right of Figure~\ref{fig:acycle-v-cycle}) with $X, Y, Z \in \reals^{n \times n}$ would be approximated as follows. First, contractions $(2, 3)$, $(4, 5)$, and $(1, 6)$ are assigned independent count sketch matrices $C_1$, $C_2$ and $C_3$, respectively. The tensor sketches $\idft(\dft C_1 \bullet \dft C_2)\vecop(X)$, $\idft(\dft C_2 \bullet \dft C_3)\vecop(Y)$, and $\idft(\dft C_3 \bullet \dft C_1)\vecop(Z)$ are then computed and combined using circular cross-correlation. Finally, the first component of the resulting vector is selected, yielding the following expression:
\begin{align*}
    & e_1\tran (\idft(\dft C_1 \bullet \dft C_2)\vecop(X) \star \idft(\dft C_2 \bullet \dft C_3)\vecop(Y) \star \idft(\dft C_3 \bullet \dft C_1)\vecop(Z))\\
    &= e_1\tran \idft ((\dft C_1 \bullet \dft C_2)\vecop(X) \circ \overline{(\dft C_2 \bullet \dft C_3)\vecop(Y)} \circ (\dft C_3 \bullet \dft C_1)\vecop(Z) ).
\end{align*}
\end{example}

To describe our method for estimating cyclic TNCs, we first introduce the \textit{complement count sketch}, which is key to our approach.

\begin{definition}[Complement count sketch]
\label{def:complement-count-sketch}
    Given a count sketch matrix $C \in \reals^{m \times n}$ with hash functions $s : \sqbr{n} \to \set{-1, 1}$ and $h: \sqbr{n} \to \sqbr{m}$, the complement count sketch matrix $C'$ of $C$ is specified by $C'(j, i) = s(i) \sqbr{- h(i) + 2 \equiv j \,(\mathrm{mod}\, m)}$. 
\end{definition}

\begin{remark}
The complement of a count sketch matrix is itself a count sketch matrix, and the complement of a complement yields the original, that is, $(C')' = C$.
\end{remark}

The complement count sketch is the circularly reversed count sketch, which implies that $\dft C' x = \overline{\dft C x}$ for any $x \in \reals^n$ as shown in Lemma~\ref{lemma:complement-count-sketch} (with proof in Appendix~\ref{sec-proof-lemma-complement-count-sketch}). 
Instead of relying on circular cross-correlation to conjugate one mode of every contraction, the complement count sketch matrix can be used to explicitly control which mode of each contraction is conjugated. 

\begin{lemma}
\label{lemma:complement-count-sketch}
        For any positive integers $m$ and $n$, let $C \in \reals^{m \times n}$ be a count sketch matrix and $C'$ its complement. Then, for any $x \in \reals^n$, $\dft C' x = \overline{\dft C x}$.
\end{lemma}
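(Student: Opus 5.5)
The plan is to prove the identity column by column. Both sides are linear in $x$, and $\dft C' x$ and $\overline{\dft C x}$ are the same linear combination (with real coefficients $x_i$) of the columns $\dft C' e_i$ and $\overline{\dft C e_i}$. So it suffices to show $\dft C' e_i = \overline{\dft C e_i}$ for every $i \in \sqbr{n}$. Here we use that $x$ is real, so conjugation commutes with the real linear combination.

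Fix $i$ and write $h = h(i)$ and $s = s(i)$. By Definition~\ref{def:count-sketch}, $C e_i = s\, e_{h}$. By Definition~\ref{def:complement-count-sketch}, $C' e_i = s\, e_{h'}$, where $h' \in \sqbr{m}$ is the unique index with $h' \equiv -h + 2 \pmod m$, equivalently $h' - 1 \equiv -(h-1) \pmod m$. Thus the sketch $C'$ places the single entry at the circularly reversed (zero-based) position.

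Next compute the DFT entries using the convention $F(k,j) = \omega^{(k-1)(j-1)}$ with $\omega = e^{-2\pi \mathrm{i}/m}$. The argument is symmetric under either sign convention, and any normalization constant is real, so it does not matter. Then
\begin{align*}
(\dft C' e_i)_k = s\,\omega^{(k-1)(h'-1)} = s\,\omega^{-(k-1)(h-1)} = \overline{s\,\omega^{(k-1)(h-1)}} = \overline{(\dft C e_i)_k}.
\end{align*}
The second equality holds because $\omega^m = 1$, so only $(h'-1) \bmod m$ matters. The third holds because $\abs{\omega} = 1$ and $s \in \set{-1,1}$ is real.

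The only delicate step is the index bookkeeping in Definition~\ref{def:complement-count-sketch}: we must check that the $+2$ offset together with the one-based indexing gives exactly the zero-based reversal $h-1 \mapsto -(h-1) \bmod m$. In particular $h = 1$ must map to itself, since $-1 + 2 = 1$. Once this is confirmed, summing the column identities over $i$ with weights $x_i$ gives the lemma.
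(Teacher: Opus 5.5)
Your proof is correct and takes essentially the same route as the paper's. Both expand the DFT entrywise and use that the complement moves each entry to the circularly reversed index $-(h-1) \bmod m$, which turns every exponential into its conjugate. The only difference is bookkeeping: you reduce to the columns $C e_i$ by real-linearity and invoke $\omega^m = 1$ explicitly, whereas the paper keeps the full sum over $i$ and substitutes $l = -j + 2$ in the summation index, relying implicitly on the same periodicity.
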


Our method first samples independent count sketch matrices for one mode of each contraction and then assigns its complement count sketch matrix to the other mode. Specifically, for each $(u, v) \in E$, $C_u$ is an independent count sketch matrix and $C_v \coloneq C'_u$. The approximation of any full TNC is then given by
\begin{align}
    \hat{\tc}(\tX_1 \times \cdots \times \tX_p, E) = e_1 \tran \idft (x_1 \circ \cdots \circ x_p), \qquad x_k = \rdbr*{\bm{\bullet}_{u \in V(k)} \dft C_u}\vecop(\tX_k), \label{eq:ours-general}
\end{align}
where $V(k)$ are the modes of $\tX_k$ and $\bm{\bullet}_{u \in S}$ denotes the row-wise Kronecker product for each member $u$ of $S$. 
The following example shows that, with our method, each contraction of the simple cyclic TNC used in Example~\ref{example:prior-method-cyclic} now has one conjugated mode. The proof of the expectation and variance bound of our estimator in Lemma~\ref{lemma:general-expectation-variance} (with proof in Appendix~\ref{sec:proof-lemma-general-expectation-variance}) relies on this property.

\begin{example}
\label{example:method-1}
    With our first method, the cyclic TNC $\tc(X \times Y \times Z, \set{(2, 3), (4, 5), (1, 6)})$ (shown on the right of Figure~\ref{fig:acycle-v-cycle}) with $X, Y, Z \in \reals^{n \times n}$ is approximated as follows. First, modes 2, 4, and 1 are assigned independent count sketch matrices $C_2, C_4$, and $C_1$, respectively. Then, modes 3, 5, and 6 are assigned the complement count sketch matrices $C_2', C_4'$, and $C_1'$, respectively. The tensor sketches $\idft(\dft C_1 \bullet \dft C_2)\vecop(X)$, $\idft(\dft C'_2 \bullet \dft C_4)\vecop(Y)$, and $\idft(\dft C'_4 \bullet \dft C'_1)\vecop(Z)$ are then computed and combined using circular convolution. Finally, the first component of the resulting vector is selected, yielding the following expression:
\begin{align*}
    & e_1\tran (\idft(\dft C_1 \bullet \dft C_2)\vecop(X) * \idft(\dft C'_2 \bullet \dft C_4)\vecop(Y) * \idft(\dft C'_4 \bullet \dft C'_1)\vecop(Z) )\\
    &= e_1\tran \idft ((\dft C_1 \bullet \dft C_2)\vecop(X) \circ (\overline{\dft C_2} \bullet \dft C_4)\vecop(Y) \circ (\overline{\dft C_4} \bullet \overline{\dft C_1})\vecop(Z) ).
\end{align*}
\end{example}

We now show that our first method is an unbiased estimator for arbitrary TNCs, and bound its variance through a reduction to the tensor sketch in Lemma~\ref{lemma:general-expectation-variance} (with proof in Appendix~\ref{sec:proof-lemma-general-expectation-variance}). While our first method has the same variance bound and asymptotic complexity as the method in \cite{heddes2024convolution}, to the best of our knowledge, our method is the first sketching method that enables the estimation of cyclic TNCs.

\begin{lemma}
\label{lemma:general-expectation-variance}
    For every $m, p \in \nats^{+}$ and any full TNC, let $\hat{\tc}$ be defined as in Eq. {\normalfont (\ref{eq:ours-general})} with sketch size $m$. 
    Then, with $y = \hat{\tc}(\tX_1 \times \cdots \times \tX_p, E)$ and $t = \abs{E}$,
\begin{align*}
    \E\sqbr{y} = \tc(\tX_1 \times \cdots \times \tX_p, E), \qquad \text{and} \qquad \Var\rdbr*{y} \leq \frac{3^t}{m}\prod_{k=1}^p \norm{\tX_k}_{\frob}^2.
\end{align*}
\end{lemma}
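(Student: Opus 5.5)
The plan is to rewrite the estimator $y$ as an inner product of two tensor sketches of vectors that live in $\reals^{n^t}$. Lemma~\ref{lemma:tensor-sketch} then gives both the expectation and the variance bound directly.

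\textbf{Step 1: expand the estimator.} Since $e_1\tran \idft w = \frac{1}{m}\sum_{j} w_j$, we have $y = \frac{1}{m}\sum_{j=1}^m \prod_{k=1}^p x_k(j)$. By WLOG~1 and~2, the modes split into the pairs $(u,v)\in E$. Each factor $\dft C_u$ or $\dft C_{u}'$ enters through a row-wise Kronecker product, so
\begin{align*}
\prod_{k=1}^p x_k(j) = \sum_{\ivi} \prod_{k} \etX_k(\ivi_{V(k)}) \prod_{(u,v)\in E} (\dft C_u)(j,i_u)\,(\dft C_u')(j,i_v).
\end{align*}
By Lemma~\ref{lemma:complement-count-sketch}, applied column-wise, $(\dft C_u')(j,i_v) = \overline{(\dft C_u)(j,i_v)}$.

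\textbf{Step 2: regroup into two tensors.} Orient each contraction $(u,v)$ so that $u$ carries $C_u$ and $v$ carries $\overline{\dft C_u}$. Let $a\in\reals^{n^t}$ be the vectorization of $\tX_1\times\cdots\times\tX_p$ restricted to the ``$u$-modes'', with the ``$v$-modes'' kept as indices of a second tensor. To make this precise, write the full tensor $\tX$ as a matrix $M\in\reals^{n^t\times n^t}$, with rows indexed by $\ivi_U$ and columns by $\ivi_{V}$, where $U$ and $V$ are the sets of first and second modes of the contractions. Then
\begin{align*}
y = \frac1m\sum_j \sum_{\ivi_U,\ivi_V} M(\ivi_U,\ivi_V)\prod_{l=1}^t (\dft C_l)(j,i_{u_l})\overline{(\dft C_l)(j,i_{v_l})}.
\end{align*}
This is $\sum_{\ivi_U,\ivi_V} M(\ivi_U,\ivi_V)\, T_{\cdot,\ivi_V}\tran T_{\cdot,\ivi_U}$ for the order-$t$ tensor sketch $T=\idft(\dft C_1\bullet\cdots\bullet\dft C_t)$. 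Here we use Parseval in the form $\frac1m (\dft a)^{*}(\dft b) = a\tran b$ for real $a,b$ (assuming the unnormalized DFT convention), and the fact that $T$ is real. So $y = \tr(T M T\tran)$.

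\textbf{Step 3: expectation and variance.} The expectation follows from linearity. Lemma~\ref{lemma:tensor-sketch} gives $\E[T_{\cdot,\ivj}\tran T_{\cdot,\ivi}]=[\ivi=\ivj]$, so $\E y=\tr(M)$, which is exactly the full TNC. The variance is the main obstacle, because $M$ is a general matrix rather than an outer product $xy\tran$. Lemma~\ref{lemma:tensor-sketch} only bounds $\Var((Tx)\tran(Ty))$.

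Here we use the product structure of $\tX$. Let the tensors be $\tX_1,\dots,\tX_p$ with $M$ built from them. I would first try a direct moment computation that mirrors the proof of Lemma~\ref{lemma:tensor-sketch}. Write $\E[y^2]$ as a sum over pairs of index tuples $(\ivi,\ivi')$ and use the 4-wise independence of each $s_l$ and the pairwise independence of each $h_l$. For each contraction $l$, the expectation of the product of the four sign-and-collision factors is nonzero only under one of three pairings: the diagonal one, and two cross pairings that each cost a $1/m$ collision factor. This gives a per-contraction factor of at most $1+2/m\le 3$. Expanding $\prod_l(1+2/m)$ minus the squared mean term leaves at most $(3^t-1)/m$ times a sum. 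Cauchy--Schwarz bounds that sum by $\prod_k\norm{\tX_k}_\frob^2$, because each pairing pattern corresponds to a rearrangement of index assignments among the tensors $\tX_k$. The difficult part is the Cauchy--Schwarz step for mixed pairings in a cyclic network. I would handle it by bounding each pattern's sum by $\norm{\tX}_\frob^2=\prod_k\norm{\tX_k}_\frob^2$, treating it as $\sum M(\ivi)M(\ivi')$ over a matching constraint that is a bijection-type coupling. Combining these gives $\Var(y)\le \frac{3^t}{m}\prod_k\norm{\tX_k}_\frob^2$.
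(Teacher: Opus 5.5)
\textbf{Verdict: genuine gap in the variance step.} Your Steps 1--2 and the expectation argument are the paper's proof. The paper rewrites $y$ as $\sum_{\ivi}\etX_1(\ivi_1)\cdots\etX_p(\ivi_p)\anbr{T\bigotimes_{(u,v)\in E}e_{i_u},\,T\bigotimes_{(u,v)\in E}e_{i_v}}$ and then simply asserts that the variance bound follows from Lemma~\ref{lemma:tensor-sketch}. You are right that this is not immediate, since $M$ is not rank one. However, your replacement argument fails exactly at the step you flag. A pairing pattern that is diagonal on some contractions and crossed on others is not a bijection-type coupling. On a diagonal contraction, the index shared within the first copy and the index shared within the second copy are summed independently. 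So for fixed $(\ivi,\ivj)$ many $(\ivi',\ivj')$ are admissible, and Cauchy--Schwarz over the free indices loses a factor $n^{d}$, where $d$ is the number of diagonally paired contractions. No argument that sees only $\norm{M}_{\frob}$ can close this. Take $t=2$ and $M=I_n\otimes B$ with $B$ symmetric with zero diagonal. Convolution with $C_1e_\alpha$ is a signed cyclic shift, so $y=n\,\tr(C_2BC_2\tran)$. Hence $\Var(y)=\frac{2n^2}{m}\norm{B}_{\frob}^2=\frac{2n}{m}\norm{M}_{\frob}^2$, which exceeds $\frac{9}{m}\norm{M}_{\frob}^2$ for $n\ge 5$. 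Such an $M$ is excluded only because the network has no self-contractions (WLOG~2), so that property must enter the proof.

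The missing idea is as follows. Bound each hash-collision probability by $1/m$ and each entry by its absolute value. Then $\Var(y)$ is at most $\frac1m$ times a sum, over the $3^t-1$ patterns that are not all-diagonal, of pattern sums built from the entrywise absolute values of the $\tX_k$. Each such pattern sum is the full contraction of a doubled network: two copies of these $p$ tensors, where a diagonal pairing of $(u,v)$ links $u$ to $v$ inside each copy, and a cross pairing links the ends of $(u,v)$ across the two copies. Every link joins two distinct tensors, so contracting one tensor at a time is a sequence of products of flattenings. Then $\norm{AB}_{\frob}\le\norm{A}_{\frob}\norm{B}_{\frob}$ bounds every pattern sum by $\prod_k\norm{\tX_k}_{\frob}^2$ by induction. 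Separately, your collision accounting is wrong for the tensor sketch. There is a single collision event $\sum_l h_l(i_l)\equiv\sum_l h_l(j_l)\pmod m$, not one per contraction. So a pattern with several cross pairings still costs $1/m$ rather than a higher power of $1/m$, and $(1+2/m)^t-1$ is not a valid intermediate bound. The correct count is $3^t-1$ patterns, each with collision probability at most $1/m$ because $\ivi\neq\ivj$, and this yields the same final constant $(3^t-1)/m$.
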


The pseudocode for our first method is provided in Algorithm~\ref{alg:general-ftc}. Our estimator supports the most general streaming data setting, the general turnstile, in which a stream of updates can change any component by any amount. This is realized by storing the sketches $x_k$ for all $k \in \sqbr{p}$ (see Algorithm~\ref{alg:general-ftc} line \ref{alg-line:sketch}), with each update being applied in time $O(q_k)$.

\begin{algorithm}
\caption{Approximating general full tensor network contractions.}
\label{alg:general-ftc}
\begin{algorithmic}[1]
\Require tensors $\tX_k \in \reals^{n \times \cdots \times n}$ of degree $q_k$ for $k \in \sqbr{p}$, contractions $E$, sketch size $m$
\Ensure approximate tensor network contraction $y \in \reals$
\For{$(u, v) \in E$}
\State Let $C_u \in \reals^{m \times n}$ be an independent count sketch matrix.
\State $C_v \gets C'_u$
\EndFor
\State $z \in \set{1}^m$
\For{$k \in \sqbr{p}$}
\State $T_k \gets \idft(\bm{\bullet}_{u \in V(k)} \dft C_u)$ \Comment{Compose a  tensor sketch matrix of count sketch matrices}
\State $x_k \gets T_k \vecop(\tX_k)$  \label{alg-line:sketch}
\State $z \gets z \circ \dft x_k$
\EndFor
\State $y \gets \frac{1}{m} \sum_{i = 1}^{m} z_i$
\end{algorithmic}
\end{algorithm}

\subsection{Exponential lower bound}
\label{sec:mativation}

The best known bounds for approximate full TNC using sketching methods, including those introduced in \cite{dobra2002processing},  \cite{heddes2024convolution}, and our first method in Section~\ref{sec:general-method}, all have an exponential dependence on the number of contractions, as shown in Table~\ref{tab:complexity-approx-tensor-contraction}. 
To motivate the need for a new approach, in Lemma~\ref{lemma:exponential-dependence} (with proof in Appendix~\ref{sec:proof-lemma-exponential-dependence}) we claim that the method in \cite{heddes2024convolution} requires an exponential dependence on the number of contractions.
In particular, we show a lower bound on the variance for approximating a sequence of matrix multiplications based on \citep[Lemma 43]{ahle2020oblivious}, which provides a variance lower bound for norm approximation using the tensor sketch.

\begin{lemma}
    \label{lemma:exponential-dependence}
    For every positive integers $n, m, q$, where $q$ is even, let $C_1, \dots, C_q \in \reals^{m \times n}$ be independent count sketch matrices, and $T_i = \idft((\dft C_{i-1}) \bullet (\dft C_{i}))$.
    Then, for the all-ones tensors $\tX_1, \tX_{q+1} \in \set{1}^n$ and $\tX_i \in \set{1}^{n \times n}$ for $2 \leq i \leq q$, 
    \begin{align*}
    \Var\rdbr*{e_1\tran(C_1 \tX_1 \star T_2\vecop(\tX_2) \star \cdots \star T_q \vecop(\tX_q) \star C_q \tX_{q+1})} \geq \rdbr*{\frac{3^q}{2m^2} - 1} \prod_{i=1}^{q+1} \norm{\tX_i}_{\frob}^2.
    \end{align*}
\end{lemma}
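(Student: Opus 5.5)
The plan is to show that, on the all-ones instance, the chained cross-correlation estimator is \emph{exactly} the squared norm of a tensor sketch of a rank-one tensor. Once that identity holds, the variance lower bound of \citep[Lemma 43]{ahle2020oblivious} for norm estimation with the tensor sketch can be quoted directly.

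\textbf{Step 1: rewrite the estimator in the Fourier domain.} Write $c_i = C_i \vone \in \reals^m$. Then $C_1\tX_1 = c_1$ and $C_q\tX_{q+1} = c_q$. Since $\vecop(\tX_i) = \vone \otimes \vone$ for $2 \le i \le q$, the mixed-product property gives
\begin{align*}
(\dft C_{i-1} \bullet \dft C_i)(\vone\otimes\vone) = \dft c_{i-1} \circ \dft c_i,
\end{align*}
so $T_i \vecop(\tX_i) = c_{i-1} * c_i$. Next I would expand the chain of circular cross-correlations with Lemma~\ref{lemma:conv-thm}, as in Example~\ref{example:prior-method-acyclic}. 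The conjugations alternate along the path, so every contraction ends up with exactly one conjugated endpoint. The main bookkeeping check is that, because $q$ is even, the pattern closes correctly at the last factor $C_q\tX_{q+1}$: $\dft c_1$ appears once plain (from $\tX_1$) and once conjugated (from $\tX_2$), $\dft c_2$ appears once conjugated (from $\tX_2$) and once plain (from $\tX_3$), and so on. Every $\dft c_i$ should therefore appear once plain and once conjugated, giving
\begin{align*}
y = \frac{1}{m}\sum_{j=1}^m \prod_{i=1}^q \abs{(\dft c_i)_j}^2 .
\end{align*}
The factor $1/m$ comes from $e_1\tran\idft$ with the unnormalized DFT, matching Algorithm~\ref{alg:general-ftc}.

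\textbf{Step 2: identify $y$ as a norm estimate.} Let $T = \idft(\dft C_1 \bullet \cdots \bullet \dft C_q)$ be an order-$q$ tensor sketch (Definition~\ref{def:tensor-sketch}), and let $x = \vone^{\otimes q} \in \reals^{n^q}$. Then $\dft T x = \dft c_1 \circ \cdots \circ \dft c_q$. By Parseval, $\norm{Tx}_2^2 = \frac1m \norm{\dft Tx}_2^2$, which equals the expression for $y$ in Step~1. Hence $y = \norm{Tx}_2^2$, the tensor-sketch estimate of $\norm{x}_2^2 = n^q$. Its mean is $n^q$ by Lemma~\ref{lemma:tensor-sketch}, which is the true contraction value.

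\textbf{Step 3: apply the lower bound and match norms.} Lemma 43 of \cite{ahle2020oblivious} lower-bounds $\Var(\norm{Tx}_2^2)$ by $\rdbr*{\frac{3^q}{2m^2}-1}\norm{x}_2^4$ for the all-ones rank-one vector $x=\vone^{\otimes q}$ (up to normalization). It remains to compare norms: $\norm{x}_2^4 = n^{2q}$, while
\begin{align*}
\prod_{i=1}^{q+1}\norm{\tX_i}_\frob^2 = n\cdot (n^2)^{q-1}\cdot n = n^{2q}.
\end{align*}
These agree, which yields the stated inequality.

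\textbf{Main obstacle.} I expect the hardest part to be the exact form of the cited lower bound. Ahle et al.\ may state it for a normalized vector, or derive it through the fourth moment of $\sum_j \prod_i \abs{(\dft c_i)_j}^2$ with a particular hash model. If their statement does not transfer verbatim, I would reprove it directly, using independence across $i$ to factor the dominant diagonal term:
\begin{align*}
\E[y^2] \ge \frac{1}{m^2}\sum_j \prod_i \E\abs{(\dft c_i)_j}^4 .
\end{align*}
I would then lower-bound each factor by a fourth moment that is about $3n^2$. This is where the $3^q$ comes from, and the $m^{-2}$ comes from losing the off-diagonal $j \neq j'$ terms.
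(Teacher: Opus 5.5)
Your proposal follows the paper's proof step for step. The paper also uses $\vecop(\tX_i)=\vone\otimes\vone$ and the mixed-product identity to write the estimator as $e_1\tran\idft$ of a Hadamard product. Because $q$ is even, each $\dft C_i\vone$ appears in it once plain and once conjugated. The paper then identifies $\frac1m\norm{(\dft C_1\vone)\circ\cdots\circ(\dft C_q\vone)}_2^2$ with $\norm{T(\vone^{\otimes q})}_2^2$, invokes \citep[Lemma 43]{ahle2020oblivious}, and matches $\norm{\vone^{\otimes q}}_2^4=n^{2q}=\prod_i\norm{\tX_i}_\frob^2$.

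The obstacle you flagged is a genuine gap, and the paper has the same gap, since it also cites Lemma 43 without conditions. The inequality as stated is false for small $n$.
\begin{itemize}
\item For $n=1$ we have $c_i=s_i(1)e_{h_i(1)}$, so $\abs{(\dft c_i)_j}=1$ for every $j$. Your own formula then gives $y\equiv 1$ and $\Var(y)=0$. The right-hand side is $\frac{3^q}{2m^2}-1$, which is positive whenever $3^q>2m^2$ (for example $m=1$, $q=2$).
\item For $m=1$ one has $\Var(y)=(3n^2-2n)^q-n^{2q}$ exactly. So the claimed bound holds if and only if $(1-\frac{2}{3n})^q\ge\frac12$, which forces $n=\Omega(q)$.
\end{itemize}
The citation therefore cannot be applied verbatim; a hypothesis such as $n\ge 4q/3$ is needed.

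Under that hypothesis your fallback gives a complete, self-contained proof, provided you keep only the real coordinate $j=j'=1$. Every term of $\E[y^2]=\frac1{m^2}\sum_{j,j'}\prod_i\E\bigl[\abs{(\dft c_i)_j}^2\abs{(\dft c_i)_{j'}}^2\bigr]$ is nonnegative. Also $(\dft c_i)_1=\sum_k s_i(k)$ has fourth moment exactly $3n^2-2n$ by 4-wise independence. By independence of the $C_i$ and Bernoulli's inequality,
\begin{align*}
\E[y^2]\ \ge\ \frac{(3n^2-2n)^q}{m^2}=\frac{3^q n^{2q}}{m^2}\Bigl(1-\frac{2}{3n}\Bigr)^q\ \ge\ \frac{3^q}{2m^2}\,n^{2q}
\end{align*}
once $n\ge 4q/3$. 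Subtracting $(\E y)^2=n^{2q}$ finishes the proof.

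Your heuristic that each factor is ``about $3n^2$'' is wrong for generic $j$. For $j\notin\{1,\frac m2+1\}$ the coordinate is complex, and $\E\abs{(\dft c_i)_j}^4=2n^2-n$. This is the real source of the $m^{-2}$: only the real coordinates carry the $3^q$ factor. If every diagonal term contributed $3^q n^{2q}$, you would get $3^q/m$, not $3^q/m^2$.
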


This result, which extends to our first method, demonstrates that the exponential dependence of the prior best methods cannot be overcome merely through a tighter analysis. We thereby conclude that a new method is necessary for achieving an improved bound for the problem of ATNC. In the remainder of this section, we introduce our second approximation method and show how this new approach eliminates the exponential dependence on the number of contractions.

\subsection{Improved approximation for acyclic tensor network contraction}
\label{sec:acyclic-method}

For the second method, besides assuming a full TNC, we assume that the tensor network is acyclic. 
Acyclic tensor networks are common in several important settings. In particular, many relational database queries have an acyclic join graph, and estimating the join size of these queries (which corresponds to a TNC) is critical for efficient database systems (see Section~\ref{sec:card-est}). 
Although tractable algorithms exist for TNC with acyclic tensor networks, the intermediate tensors may still be prohibitively large, motivating the need for approximate TNC.

To facilitate the explanation of our proposed solution, we first introduce some additional notation and concepts. Specifically, we introduce the notion of a root tensor $\tX_o$, with $o \in \sqbr{p}$, which is arbitrarily selected from the contracted tensors to form a rooted tree structure from the tensor network.
We then define $\Gamma(k)$ as the outgoing neighbors\footnote{For notational simplicity, the neighbor function $\Gamma$ is assumed to yield neighbors in an order matching the modes of the corresponding tensor, and the first mode of a non-root tensor is assumed to contract with its parent in the tree. This arrangement can always be achieved by simply permuting the indices of the tensor.} of tensor $\tX_k$, that is, the set of tensors that share a contraction with $\tX_k$ and farther from the root tensor. 
Note that all the leaves of the rooted tree are vectors.
With this, we are able to write acyclic full TNC recursively as shown in Lemma~\ref{lemma:recursive-AFTNC} (with proof in Appendix~\ref{sec:proof-lemma-recursive-AFTNC}).

\begin{lemma}
\label{lemma:recursive-AFTNC}
    Given any $o \in \sqbr{p}$, any acyclic full TNC can be expressed recursively as 
    \begin{align*}  
    \tc(\tX_1 \times \cdots \times \tX_p, E) = \vecop(\tX_o) \tran g(o), \qquad g(k) = \bigotimes_{l \in \Gamma(k)} \mat(\tX_l)g(l).
    \end{align*}
\end{lemma}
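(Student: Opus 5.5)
I would prove the statement by strong induction on the height of the rooted tree. The inductive claim is slightly stronger than the lemma. For every non-root tensor $\tX_k$, the vector $\mat(\tX_k)g(k) \in \reals^n$ should be the partial contraction of the subtree rooted at $k$, with only the first mode of $\tX_k$ (the one contracting with its parent) left free. Concretely, for every $i \in \sqbr{n}$, $(\mat(\tX_k)g(k))(i)$ equals the sum, over all indices of the contracted modes inside the subtree of $k$, of $\etX_k(i, \dots)\prod_{l} \etX_l(\cdot)$, where the product runs over the descendants $l$ of $k$. By WLOG 1–3, every mode lies in exactly one contraction, and each edge joins a parent to a child. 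So the subtree's modes split cleanly into the free first mode of $\tX_k$ and pairs of contracted modes, and after the Iverson brackets of Definition~\ref{def:tensor-contraction} are reduced, each contraction carries one summation index.

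The base case is a leaf $k$. Then $\Gamma(k) = \emptyset$, so $g(k) = 1$ by the empty Kronecker product convention. Since the leaf is a vector, $\mat(\tX_k)g(k) = \tX_k$, which is the claim.

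For the inductive step, let $\tX_k$ have order $1 + r$ and children $l_1, \dots, l_r = \Gamma(k)$, listed in mode order as the footnote allows. By the definition of the mode-$1$ flattening, $\mat(\tX_k)(i, \cdot)$ lists the fiber entries $\etX_k(i, j_1, \dots, j_r)$ in lexicographic order. By Definition~\ref{def:kronecker-product}, iterated, $\bigotimes_{s} v_s$ has entry $\prod_s v_s(j_s)$ at the lexicographic position of $(j_1, \dots, j_r)$. Therefore
\begin{align*}
(\mat(\tX_k)g(k))(i) = \sum_{j_1, \dots, j_r} \etX_k(i, j_1, \dots, j_r) \prod_{s=1}^r (\mat(\tX_{l_s})g(l_s))(j_s).
\end{align*}
Substituting the induction hypothesis for each child turns each factor into a sum over its subtree's internal indices. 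The subtrees are disjoint because the network is acyclic, so their index sets are disjoint. Distributing the product over these sums gives exactly the partial contraction of the subtree of $k$, where $j_s$ is the shared index of the edge $(k, l_s)$.

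At the root, the same computation without a free mode gives $\vecop(\tX_o)\tran g(o) = \sum_{\ivj} \etX_o(\ivj)\prod_s (\mat(\tX_{l_s})g(l_s))(j_s)$. This is the sum over all contraction indices of the product of all tensor entries, which is $\tc(\tX_1 \times \cdots \times \tX_p, E)$ after Iverson reduction, as in Example~\ref{example:tensor-network-contraction}.

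The main obstacle is the bookkeeping, not any deep step. Two points need care. First, the lexicographic ordering used by $\vecop$ and $\mat$ must match the index ordering of the Kronecker product, which the footnote's ordering convention guarantees. Second, each contraction index must be summed exactly once. Acyclicity ensures this: every edge is a parent–child edge, so its index is introduced precisely at the parent's step of the recursion.
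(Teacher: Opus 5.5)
Your proof is correct. It rests on the same entrywise identity as the paper's proof, namely $(\mat(\tX_k)\bigotimes_s v_s)(i) = \sum_{j_1,\dots,j_r} \etX_k(i,j_1,\dots,j_r)\prod_s v_s(j_s)$, but it organizes the argument differently.

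The paper unrolls the recursion top-down from the root, one level at a time. At every parent--child edge it introduces a fresh copy of the shared index and carries the Iverson bracket $\sqbr{i_u = i_v}$ along. After all $w$ levels, the expression is literally the unreduced sum of Definition~\ref{def:tensor-contraction}. You instead run a bottom-up structural induction with a strengthened invariant: $\mat(\tX_k)g(k)$ is the contraction of the subtree at $k$ with only the parent mode free. You also work with one index per contraction throughout.

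Your version turns the paper's informal ``repeat this expansion for all levels'' into a genuine induction. It also makes explicit where the two key ingredients enter. Acyclicity is what makes the child subtrees disjoint, so a product of sums becomes a single sum over disjoint index sets. The footnote's mode-ordering convention is what makes the lexicographic order of $\mat$ and $\vecop$ match the Kronecker ordering. The paper's version has the minor advantage of ending directly on the Iverson-bracket form of Definition~\ref{def:tensor-contraction}. It therefore needs no separate reduction step to identify the result with $\tc(\tX_1 \times \cdots \times \tX_p, E)$.
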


\label{sec:sketch-construction}

Our second approach for approximate TNC builds upon the following insight: the recursive formulation of TNC consists of a sequence of matrix multiplications involving Kronecker products. Leveraging this observation, we sketch the Kronecker products using recursive sketch matrices. This leads to the following approximation of TNCs:
\begin{align}
    \hat{\tc}(\tX_1 \times \cdots \times \tX_p, E) = \vecop(\tX_o)\tran R_o\tran R_o \hat{g}(o), \qquad \hat{g}(k) = \bigotimes_{l \in \Gamma(k)} \mat(\tX_l)R_l\tran R_l \hat{g}(l), \label{eq:ours-acyclic}
\end{align}
where each $R_k$ maps a vectorized order-$(q_k - 1)$ tensor to a vector of size $m$, with the exception of $R_o$, which maps a vectorized order-$q_o$ tensor to a vector of size $m$.

In Lemma~\ref{lemma:acyclic-expectation-variance}, we show that our second method is an unbiased estimator of acyclic full TNC, and we bound its variance. The proof, provided in Appendix~\ref{sec:proof-lemma-acyclic-expectation-variance}, bounds the variance by accumulating $(1 + 8/m)^p$ terms from the recursive sketch variance bound in Lemma~\ref{lemma:recursive-sketch} while traversing the rooted tree. The expectation result follows from the expectation of the recursive sketch.

\begin{lemma}
\label{lemma:acyclic-expectation-variance}
For every $m, p \in \nats^{+}$ and any acyclic full TNC, let $\hat{\tc}$ be defined as in Eq. {\normalfont (\ref{eq:ours-acyclic})} with sketch size $m$. 
Then, with $y = \hat{\tc}(\tX_1 \times \cdots \times \tX_p, E)$ and $t = \abs{E}$,
    \begin{align*}
        \E\sqbr{y} = \tc(\tX_1 \times \cdots \times \tX_p, E) \qquad \text{and} \qquad \Var\rdbr*{y} \leq \rdbr*{\!\rdbr*{1+\frac{8}{m}}^{2t} - 1}\prod_{k=1}^p \norm{\tX_k}_{\frob}^2.
    \end{align*}
\end{lemma}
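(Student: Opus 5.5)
We prove both claims by induction over the rooted tree, processing it from the leaves up to the root $o$. The sketches $R_k$ for different tensors are independent, and every $\hat g(k)$ depends only on the sketches in the subtree rooted at $\tX_k$. Write $D_k$ for the set of descendants of $k$, including $k$ itself, and $t_k = \abs{D_k}-1$ for the number of contractions inside that subtree. We will show two things for every non-root $k$: $\E[\hat g(k)] = g(k)$, and the second moment $\E\norm{\hat g(k)}_2^2$ is controlled by $\prod_{l \in D_k\setminus\{k\}} \norm{\tX_l}_{\frob}^2$ times a factor $(1+8/m)^{2t_k}$.

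\textbf{Expectation.} The Kronecker product factorizes over independent randomness. Each factor $\mat(\tX_l)R_l\tran R_l \hat g(l)$ involves $R_l$ together with sketches from a subtree that is disjoint from the other factors. Conditioning on $\hat g(l)$ and using $\E[R_l\tran R_l]=I$ (the expectation part of Lemma~\ref{lemma:recursive-sketch}, applied with $x=e_i$, $y=e_j$), each factor has mean $\mat(\tX_l)\,\E[\hat g(l)] = \mat(\tX_l)g(l)$ by induction. At the root, $R_o$ is independent of $\hat g(o)$, so $\E[y] = \vecop(\tX_o)\tran g(o)$. Lemma~\ref{lemma:recursive-AFTNC} then gives unbiasedness.

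\textbf{Variance.} The cleanest route is to bound $\E[y^2]$ and subtract $\tc^2$, so we need a second-moment recursion. For a fixed vector $a$, Lemma~\ref{lemma:recursive-sketch} gives
\[
\E\norm{R a}_2^4 \le (1+8/m)^{2q}\norm a_2^4 .
\]
For a fixed vector $b$, the bound $\E\norm{R\tran R b}_2^2 \le (1+8/m)^{2q}\norm b_2^2$ comes from writing $\norm{R\tran R b}_2^2=\sum_i (e_i\tran R\tran R b)^2$ and summing the variance bounds.

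The difficulty is that the order-$q$ factor in Lemma~\ref{lemma:recursive-sketch} would make the exponent depend on tensor orders, not on $t$. To fix this, we will read the order in the recursive sketch as the number of Kronecker factors actually sketched, namely the children of $k$. Each child contributes one $(1+8/m)^2$ factor, so summed over the tree the exponent totals $2\sum_k\abs{\Gamma(k)} = 2t$.

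Concretely, we will rewrite each step as an inner product. At the root, $y = \anbr{R_o\vecop(\tX_o), R_o\hat g(o)}$. Conditioning on $\hat g(o)$ gives
\[
\E[y^2\mid \hat g(o)] \le (1+8/m)^{2\abs{\Gamma(o)}}\norm{\tX_o}_{\frob}^2\norm{\hat g(o)}_2^2 .
\]
Next, $\norm{\hat g(k)}_2^2 = \prod_{l\in\Gamma(k)}\norm{\mat(\tX_l)R_l\tran R_l\hat g(l)}_2^2$, and these factors are independent. For each factor, submultiplicativity bounds $\norm{\mat(\tX_l)v}_2 \le \norm{\tX_l}_{\frob}\norm{v}_2$. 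Combined with the conditional bound on $R_l\tran R_l$, this gives
\[
\E\norm{\hat g(k)}_2^2 \le \prod_{l\in\Gamma(k)} \norm{\tX_l}_{\frob}^2 (1+8/m)^{2\abs{\Gamma(l)}}\,\E\norm{\hat g(l)}_2^2 .
\]
Here $\hat g$ at a leaf is $1$. Unrolling the recursion yields $\E[y^2]\le (1+8/m)^{2t}\prod_k\norm{\tX_k}_{\frob}^2$.

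To finish, we subtract $\tc^2$. By Cauchy--Schwarz and submultiplicativity along the same recursion, $\tc^2 \le \prod_k\norm{\tX_k}_{\frob}^2$, so $\Var(y)\le((1+8/m)^{2t}-1)\prod_k\norm{\tX_k}_{\frob}^2$.

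\textbf{Main obstacle.} The hard step is the exponent bookkeeping. Bounding $\norm{\mat(\tX_l) R_l\tran R_l \hat g(l)}$ via operator-type bounds must not lose the tight "$-1$" structure, and the order of each $R_l$ must be charged to edges rather than modes. If bounding the second moment of $R\tran R b$ by $\norm b^2$ is too lossy, the fallback is to keep everything in inner-product form $\anbr{R_l a, R_l b}$ and apply Lemma~\ref{lemma:recursive-sketch} directly, conditioning level by level.
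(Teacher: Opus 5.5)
Your expectation argument is correct, and the exponent bookkeeping is not really an obstacle. By construction $R_k$ has order $q_k-1=\abs{\Gamma(k)}$, and $q_o=\abs{\Gamma(o)}$ at the root, so the factors sum to $2t$. The variance argument, however, has two genuine gaps.

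\textbf{First gap.} The auxiliary bound $\E\norm{R\tran R b}_2^2\le(1+8/m)^{2q}\norm{b}_2^2$ is false. Summing the per-coordinate bounds over all $n^q$ coordinates gives $\norm{b}_2^2+n^q((1+8/m)^{2q}-1)\norm{b}_2^2$. Already for one count sketch, $\E\norm{C\tran C e_1}_2^2=1+(n-1)/m$. So the step $\norm{\mat(\tX_l)v}_2\le\norm{\tX_l}_{\frob}\norm{v}_2$ with $v=R_l\tran R_l\hat g(l)$ loses a factor of order $n^{\abs{\Gamma(l)}}/m$. Your recursion for $\E\norm{\hat g(k)}_2^2$ is still true, but only via your ``fallback'', which is therefore mandatory. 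Put the rows $a_i$ of $\mat(\tX_l)$ inside the sketch, $\norm{\mat(\tX_l)R_l\tran R_l\hat g(l)}_2^2=\sum_i\anbr{R_la_i,R_l\hat g(l)}^2$, and apply Lemma~\ref{lemma:recursive-sketch} conditionally to each term, as the paper does.

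\textbf{Second gap, and the decisive one.} The last step runs backwards. Write $\tau$ for the exact contraction and $P=\prod_k\norm{\tX_k}_{\frob}^2$. Then $\Var(y)=\E[y^2]-\tau^2$, so from $\E[y^2]\le(1+8/m)^{2t}P$ you would need the \emph{lower} bound $\tau^2\ge P$. The Cauchy--Schwarz upper bound $\tau^2\le P$ gives nothing. For $\tau=0$ your argument yields only $\Var(y)\le(1+8/m)^{2t}P\approx P$, which is useless for an $(\epsilon,\delta)$ guarantee. The ``$-1$'' is the whole content of the lemma, and a raw second-moment recursion cannot produce it.

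You must propagate centered quantities; the paper uses the law of total variance at the root for this reason. A clean version: let $c=(1+8/m)^2$, $A_l=\mat(\tX_l)R_l\tran R_l\hat g(l)$, $B_l=\mat(\tX_l)g(l)$, and let $s_l$ and $P_l$ be the number of edges and the product of squared Frobenius norms in the subtree of $l$. Prove $\E\norm{A_l}_2^2\le\norm{B_l}_2^2+(c^{s_l}-1)P_l$ by induction. The row-wise bound, $\E\hat g(l)=g(l)$, and independence across subtrees give
\[
\E\norm{A_l}_2^2-\norm{B_l}_2^2\le\norm{\tX_l}_{\frob}^2\Bigl(c^{\abs{\Gamma(l)}}\prod_{j\in\Gamma(l)}\E\norm{A_j}_2^2-\prod_{j\in\Gamma(l)}\norm{B_j}_2^2\Bigr).
\]
After inserting the induction hypothesis, the right-hand side is nondecreasing in each $\norm{B_j}_2^2$. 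So Cauchy--Schwarz, $\norm{B_j}_2^2\le P_j$, may be substituted there, giving $(c^{s_l}-1)P_l$. The same computation at the root, via $\Var(y)=\E[\Var(y\mid\hat g(o))]+\Var(\vecop(\tX_o)\tran\hat g(o))$, yields $(c^t-1)P$. That is where Cauchy--Schwarz belongs: inside a monotone expression, not as a subtracted term.
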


Computing an estimate efficiently is not straightforward, particularly since each $\mat(\tX_l)R_l\tran$ results in a (potentially large) $n$-by-$m$ matrix. %
To address this, we decompose the recursive sketch matrix from the previous recursion step and integrate its initial dimensionality reduction directly into the following step of the recursion. Since $R_k = Q_kS_k$, where $S_k = \bigotimes_{l \in \Gamma(k)} C_{k, l}$ is a Kronecker product of independent count sketch matrices, we can use the mixed-product identity $(A \otimes B)(x \otimes y) = (Ax) \otimes (By)$ to obtain the following factorization:
\begin{align*}
\vecop(\tX_o)\tran R_o\tran Q_oS_o \rdbr*{\bigotimes_{l \in \Gamma(o)} \mat(\tX_l)R_l\tran R_l \hat{g}(l)} = \vecop(\tX_o)\tran R_o\tran Q_o \rdbr*{\bigotimes_{l \in \Gamma(o)} C_{o,l}\mat(\tX_l)R_l\tran R_l \hat{g}(l)},
\end{align*}
which can be applied to every recursion step. This yields $C_{k,l}\mat(\tX_l)R_l\tran$, which are manageable $m$-by-$m$ matrices that can be efficiently computed in time $O(q_l \nnz(\tX_l))$. As a result, we have transformed a problem involving tensors of any order into a significantly more tractable problem that deals only with second-order tensors.

To eliminate the need for storing intermediate matrices altogether, we propose a progressive computation scheme that traverses the rooted tree of the tensor network from its leaves to the root. This approach efficiently computes the matrix-vector product during the creation of each subsequent sketch, resulting in compact sketch vectors at every recursion step.

Specifically, at the leaves we compute the count sketches of first-order tensors as $x_l = C_{k, l}\vecop(\tX_l)$, which takes $O(\nnz(\tX_l))$ time. Using these leaf-level sketch vectors, we then proceed one level up the tree. At this level, we first compute $r_k = Q_k \rdbr{\bigotimes_{l \in \Gamma(k)} x_l}$, which represents a recursive sketch of the leaf-level sketch vectors. We then compute $x_k = C_{l, k} \mat(\tX_k) R_k\tran r_k$, combining the computation of the sketch matrix $C_{l,k} \mat(\tX_k) R_k\tran$ and the matrix-vector product involving $r_k$. When carried out jointly, these operations require $O(q_k \nnz(\tX_k))$ time, improving upon the $O(q_k \nnz(\tX_k) + m^2)$ time needed if performed separately. 
The pseudocode for this procedure (called the \textsc{SketchedMatrixVectorProduct}) is provided in Appendix~\ref{sec:additional-algorithms}, Algorithm~\ref{alg:sketch-matrix-vector-product}, and 
the pseudocode for our second method is provided in Algorithm~\ref{alg:acylic-ftc}.
Example~\ref{example:method-2} illustrates our second method applied to the acyclic tensor network shown in Figure~\ref{fig:example-acyclic-tensor-diagram}.

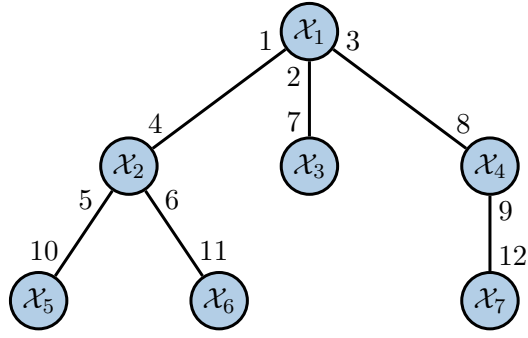
\begin{figure}[h]
    \centering
    \scalebox{1}{\begin{tikzpicture}[auto, node distance=10mm, thick, main node/.style={circle, fill=blue!30, draw, minimum size=0.75cm, inner sep=0pt}, every node/.style={line width=0.4mm}, every path/.style={line width=0.4mm}]

\node[main node] (x1) {$\tX_1$};

\node[main node] (x2) [below=of x1, xshift=-24mm] {$\tX_2$};
\node[main node] (x3) [below=of x1] {$\tX_3$};
\node[main node] (x4) [below=of x1, xshift=24mm] {$\tX_4$};

\node[main node] (x5) [below=of x2, xshift=-12mm] {$\tX_5$};
\node[main node] (x6) [below=of x2, xshift=12mm] {$\tX_6$};
\node[main node] (x7) [below=of x4] {$\tX_7$};

\draw (x1) -- (x2) node [pos=0.15, above=1pt] {1} node [pos=0.98, above=1pt] {4};
\draw (x1) -- (x3) node [pos=0.2, left=-1pt] {2} node [pos=0.8, left=-1pt] {7};
\draw (x1) -- (x4) node [pos=0.15, above=1pt] {3} node [pos=0.98, above=1pt] {8};

\draw (x2) -- (x5) node [pos=0.10, left=1pt] {5} node [pos=0.7, left=1pt] {10};
\draw (x2) -- (x6) node [pos=0.10, right=1pt] {6} node [pos=0.7, right=1pt] {11};
\draw (x4) -- (x7) node [pos=0.20, right=-1pt] {9} node [pos=0.8, right=-1pt] {12};

\end{tikzpicture}}
    \caption{Tensor diagram of an example acyclic tensor network}
    \label{fig:example-acyclic-tensor-diagram}
\end{figure}

\begin{example}
    \label{example:method-2}
    Using our second method, the TNC shown in Figure~\ref{fig:example-acyclic-tensor-diagram},
    \begin{align*}
        \tc(\tX_1 \times \tX_2 \times \tX_3 \times \tX_4 \times \tX_5 \times \tX_6 \times \tX_7, \set{(1, 4), (2, 7), (3, 8), (5, 10), (6, 11), (9, 12)}),
    \end{align*}
    with $\tX_1, \tX_2 \in \reals^{n \times n \times n}$, $\tX_4 \in \reals^{n \times n}$, and $\tX_3, \tX_5, \tX_6, \tX_7 \in \reals^n$ is approximated as follows. First, let $\tX_1$ be selected as the arbitrary root tensor, which allows us to interpret the tensor network as a rooted tree. Then, each tensor is assigned an independent recursive sketch matrix whose order matches the number of children the tensor has in the rooted tree. Thus, we have $R_1 \in \reals^{m \times n^3}$, $R_2 \in \reals^{m \times n^2}$, and $R_4 \in \reals^{m \times n}$. Recall that each recursive sketch matrix can be decomposed as $R_k = Q_k \rdbr{\bigotimes_{l \in \Gamma(k)} C_{k,l}}$, where $C_{k,l}$ are independent count sketch matrices. Now, starting at the lowest level of the tree, we compute the count sketches $x_5 = C_{2,5}\vecop(\tX_5)$, $x_6 = C_{2, 6}\vecop(\tX_6)$, $x_7 = C_{4, 7}\vecop(\tX_7)$. We then move one level up the tree, first computing the recursive sketches $r_2 = Q_2(x_5 \otimes x_6)$ and $r_4 = Q_4x_7$, and then $x_2 = C_{1, 2}\mat(\tX_2)R_2\tran r_2$, $x_3 = C_{1, 3}\vecop(\tX_3)$, and $x_4 = C_{1, 4}\mat(\tX_4)R_4\tran r_4$ using Algorithm~\ref{alg:sketch-matrix-vector-product}. Lastly, at the root of the tree, we compute $x_1 = R_1 \vecop(\tX_1)$, $r_1 = Q_1(x_2 \otimes x_3 \otimes x_4)$, and $y = x_1\tran r_1$.
\end{example}

\begin{algorithm}[h]
\caption{Approximating acyclic full tensor network contraction.}
\label{alg:acylic-ftc}
\begin{algorithmic}[1]
\Require tensors $\tX_k \in \reals^{n \times \cdots \times n}$ of degree $q_k$ for $k \in \sqbr{p}$, contractions $E$, sketch size $m$
\Ensure approximate tensor network contraction $y \in \reals$
\State Let $o$ be any member of $\sqbr{p}$.
\State Let $R_o \in \reals^{m \times n^{q_o}}$ be an independent recursive sketch.
\For{$k \in \sqbr{p} \setminus \set{o}$}
\State Let $R_k = Q_k(\bigotimes_{l \in \Gamma(k)} C_{k, l}) \in \reals^{m \times n^{q_k - 1}}$ be an independent recursive sketch.
\EndFor
\Function{NeighborsSketch}{$k$}
\State \textbf{if} {$\abs{\Gamma(k)} = 0$} \textbf{then} \Return $1$
\For{$l \in \Gamma(k)$}
\State $r_l \gets \textsc{NeighborsSketch}(l)$
\State $x_l \gets \textsc{SketchedMatrixVectorProduct}(C_{k, l}, \tX_l, R_l, r_l)$ \Comment{$C_{k, l} \mat(\tX_l) R_l\tran r_l$}
\EndFor
\State \Return $Q_k(\bigotimes_{l \in \Gamma(k)} x_l)$ \Comment{Apply remainder of the recursive sketch}
\EndFunction
\State $x_o \gets R_o\vecop(\tX_o)$
\State $r_o \gets \textsc{NeighborsSketch}(o)$
\State $y \gets x_o\tran r_o$
\end{algorithmic}
\end{algorithm}

\subsection{Main result}
\label{sec:main_result}

In this section, we combine the results from the previous sections to obtain our main result stated in Theorem~\ref{thm:errorbound}, which shows that our first method is an $(\epsilon, \delta)$-ATNC for any full TNC and that our second method is an $(\epsilon, \delta)$-ATNC for acyclic full TNCs.
The proof (presented in Appendix~\ref{sec:proof-thm-errorbound}) uses Chebyshev's inequality with the expected value and the variance upper bound from Lemmas~\ref{lemma:general-expectation-variance} and \ref{lemma:acyclic-expectation-variance} to obtain estimators with constant success probability.
By selecting the median of $O(\log 1 / \delta)$ independent estimates, each with a constant success probability, we can use the Chernoff bound to obtain an estimator that succeeds with probability $1 - \delta$. This approach is sometimes referred to as the \textit{median trick}. It is commonly applied to boost an estimator with constant success probability to success with high probability.

\errorbound*

Lastly, in Corollary~\ref{corollary:full-to-partial-tc} (with proof in Appendix~\ref{sec:proof-corollary-full-to-partial-tc}), we show that our two methods can also be applied to partial TNC, that is, those that evaluate to tensors of nonzero order. The time complexity of this generalization grows linearly with the number of components in the resulting tensor. Interestingly, the computational complexity of our methods only depends on the sizes of the input and output tensors.

\begin{corollary}
\label{corollary:full-to-partial-tc}
For every $p, n \in \nats^{+}$, any order-$q_k$ tensors $\tX_k \in \reals^{n \times \cdots \times n}$ for $k \in \sqbr{p}$, and every $\epsilon, \delta > 0$, there exists an $(\epsilon, \delta)$-ATNC of a partial TNC which results in an order-$\bar{q}$ tensor that can be computed in time $O((n^{\bar{q}} pm \log m + q N) \log 1/ \delta)$ using $O(n^{\bar{q}} m p \log 1/ \delta)$ space, with {\bf (1)} $m = \Omega(t/\epsilon^2)$ in the acyclic setting and {\bf (2)} $m = \Omega(3^t/\epsilon^2)$ in general, where $N = \sum_{k=1}^p \nnz(\tX_k)$, $q = \sum_{k=1}^p q_k$, and $t = \abs{E}$.
\end{corollary}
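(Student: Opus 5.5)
The plan is to reduce the partial TNC to $n^{\bar q}$ full TNCs, one for each component of the output, and then apply Theorem~\ref{thm:errorbound} (more precisely, the estimators behind it) to each of them. Fix an output multi-index $\ivi_{K'}$ with $\ivone \le \ivi_{K'} \le \ivn_{K'}$. For every free mode $u \in K'$, attach a new contraction between mode $u$ and a standard basis vector $e_{i_u} \in \reals^n$. The resulting network $E_{\ivi_{K'}}$ is full, and by Definition~\ref{def:tensor-contraction} its contraction equals $\tc(\tX, E)(\ivi_{K'})$.

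The added vectors are leaves, so acyclicity is preserved. However, they add $\bar q$ contractions and $\bar q$ tensors. To keep the parameters $t$ and $p$ unchanged, I would avoid introducing them explicitly. Instead, I would use the fact that contracting a free mode with $e_{i_u}$ amounts to restricting $\tX_k$ to the slice $i_u$ of that mode. So the $\ivi_{K'}$-th full TNC is the original network with each $\tX_k$ replaced by its slice $\tX_k^{(\ivi)}$. This slicing keeps the same contractions $E$ and the same $p$, and it satisfies $\nnz(\tX_k^{(\ivi)}) \le \nnz(\tX_k)$.

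The key step is the error accounting. For each $\ivi_{K'}$, Lemma~\ref{lemma:general-expectation-variance} (resp.\ Lemma~\ref{lemma:acyclic-expectation-variance}) gives an unbiased estimate $y_{\ivi}$ with variance at most $\beta \prod_k \|\tX_k^{(\ivi)}\|_\frob^2$, where $\beta = 3^t/m$ (resp.\ $(1+8/m)^{2t}-1 = O(t/m)$ for $m \gtrsim t$). Summing over $\ivi_{K'}$ gives
\[
\E\|\hat{\tc} - \tc\|_\frob^2 \le \beta \sum_{\ivi_{K'}} \prod_k \|\tX_k^{(\ivi)}\|_\frob^2 \le \beta \prod_k \|\tX_k\|_\frob^2 = \beta\|\tX\|_\frob^2 .
\]
The inequality holds because the free modes of distinct tensors are disjoint, so the sum over $\ivi_{K'}$ factors into a product over $k$ of sums over each tensor's own free indices, and each such sum is exactly $\|\tX_k\|_\frob^2$. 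Chebyshev (Markov on the squared norm) then gives constant success probability once $m = \Omega(3^t/\epsilon^2)$ or $m=\Omega(t/\epsilon^2)$.

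To boost to probability $1-\delta$, a componentwise median is not immediately valid for a Frobenius-norm guarantee. I would instead repeat the whole tensor estimate $O(\log 1/\delta)$ times and select a repetition whose estimate is within $2\epsilon\|\tX\|_\frob$ of more than half of the other repetitions (the standard median trick in a metric space), then rescale $\epsilon$. I expect this boosting step to be the main technical point; the rest is bookkeeping.

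For complexity, the same sketch matrices can be shared across all $\ivi_{K'}$, since only expectations and variances per component were used. Each nonzero of $\tX_k$ is hashed once in $O(q_k)$ time and routed to the sketch of its own slice, so the sketching cost is $O(qN)$ overall. The FFT/recursive-sketch combination costs $O(pm\log m)$ per output component, which gives $O((n^{\bar q}pm\log m + qN)\log 1/\delta)$ time and $O(n^{\bar q}mp\log 1/\delta)$ space.
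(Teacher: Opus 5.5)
Your error analysis is the paper's own. You slice each $\tX_k$ along its free modes, apply Lemma~\ref{lemma:general-expectation-variance} or Lemma~\ref{lemma:acyclic-expectation-variance} entrywise, and use $\sum_{\ivr}\prod_k \norm{\tX_k(\ivs_k)}_{\frob}^2 = \prod_k \norm{\tX_k}_{\frob}^2$.

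You depart from the paper at the probability step, and your version matches the statement better. The paper sets $m \ge 32t/(\epsilon^2\delta)$ (resp.\ $3^t/(\epsilon^2\delta)$) and applies Chebyshev once, with no repetitions. That puts $1/\delta$ inside $m$, rather than giving $m=\Omega(t/\epsilon^2)$ times a $\log(1/\delta)$ repetition factor. You are right that applying the scalar median of Theorem~\ref{thm:errorbound} coordinatewise yields no Frobenius-norm bound. The metric-space median you describe (keep a repetition within $2\epsilon\norm{\tX}_{\frob}$ of a majority, then rescale $\epsilon$ by $3$) does give the stated dependence. Its selection step computes pairwise distances between repetitions, costing an extra $O(n^{\bar q}\log^2(1/\delta))$, which is absorbed only when $\log(1/\delta) = O(pm\log m)$. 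Your remark that sketches can be shared across output entries is also correct, since only per-entry means and variances enter $\E\norm{\hat{\tY}-\tY}_{\frob}^2$; the paper leaves this implicit.

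The paper's proof contains no complexity argument at all. Yours is sound for the general method, where each slice of $\tX_k$ has its own tensor sketch, independent of the other tensors.

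It does not go through for the acyclic method. There, \textsc{SketchedMatrixVectorProduct} fuses the pass over $\tX_l$ with the vector $r_l$, computing $C_{k,l}\mat(\tX_l)R_l\tran r_l$. Since $r_l$ changes with the free indices in $l$'s subtree, there is no per-slice $m$-vector into which nonzeros can be routed. You must do one of two things:
\begin{itemize}
\item Repeat the pass over $\tX_l$ for every distinct $r_l$. This multiplies its $O(q_l\nnz(\tX_l))$ cost by $n$ raised to the number of free modes below $l$.
\item Store the $m\times m$ matrices $C_{k,l}\mat(\tX_l(\ivs_l))R_l\tran$. This costs $O(m^2)$ space per slice and $O(m^2)$ time per product, versus the $O(pm\log m)$ per entry you budget.
\end{itemize}
Either option can exceed the stated bounds. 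An example is two large dense internal tensors on the path between two tensors that carry free modes: every choice of root leaves one of the large tensors with a free mode below it. So case (1) of the complexity claim needs a separate argument, which the paper does not supply either.
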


We have now established that our first method is an ATNC in general and that our second method is an ATNC in the acyclic setting. In the remainder of this manuscript, we highlight applications of TNC for which approximate solutions are permissible. 

\section{Applications of tensor network contraction}
\label{sec:applications}

TNC, by virtue of its generality, is important in many disciplines. From the mathematical modeling of physical phenomena to addressing both theoretical and practical issues in computer science. 
In quantum mechanics, tensors model many-body quantum wave functions, and approximate TNCs are used to simulate quantum computers \cite{bridgeman2017hand}. In machine learning, the contraction of efficient tensor representations seeks to mitigate the computational cost of training large models \cite{oseledets2011tensor,novikov2015tensorizing}. 
Meanwhile, in probability theory, the joint probability distribution of discrete random variables is naturally represented by a tensor, and marginalization in graphical models has been shown to be equivalent to TNC \cite{robeva2019duality}.
In graph theory, the number of $q$-colorings of a graph can be reduced to TNC \cite{bridgeman2017hand}.
In the remainder of this section, we highlight two applications of TNC: join size estimation in database systems and triangle counting in graphs.

\subsection{Join size estimation}
\label{sec:card-est}

One particularly relevant application of approximate TNC lies within database systems, namely join size estimation. In this setting, the join size is the number of tuples that result from a join. In general, the execution time of a query is highly sensitive to the join execution order \cite{leis2015good}. Since determining the optimal join order is known to be \textsf{NP}-hard \cite{ibaraki1984optimal}, query optimizers often rely on intermediate join size estimates\footnote{The intermediate query result can itself be expressed as a query (a subquery of the query to optimize).} as their primary input to assess the cost of a particular join order.

\begin{figure}[h]
    \centering
    \begin{minipage}{.45\textwidth}
\begin{verbatim}
SELECT COUNT(*) 
FROM R1, R2, R3, R4
WHERE R1.1 = R2.2 
AND R3.4 = R2.2 
AND R4.5 = R2.3
\end{verbatim}
\end{minipage}%
\begin{minipage}{0.45\textwidth}
    \begin{tikzpicture}[thick, main node/.style={circle, fill=blue!30, draw, minimum size=0.75cm, inner sep=0pt}, every node/.style={line width=0.4mm}, every path/.style={line width=0.4mm}]

    \node[main node] (A) at (120:3em) {$\tX_1$};
    \node[main node] (B) at (240:3em) {$\tX_3$};
    \node[main node] (D) at (0:3em) {$\tX_2$};
    \node[main node, right=3em of D] (E) {$\tX_4$};
    
    \draw[-] (A) -- (0, 0) node [pos=0.35, below=-1pt, left=-1pt] {1};
    \draw[-] (B) -- (0, 0) node [pos=0.35, above=-1pt, left=-1pt] {4};
    \draw[-] (D) -- (0, 0) node [pos=0.10, above=-1pt] {2};
    \draw[-] (D) -- (E) node [pos=0.10, above=-1pt] {3} node [pos=0.90, above=-1pt] {5};
    
\end{tikzpicture}
\end{minipage}%
    \caption{Example SQL query (left) and corresponding tensor network diagram (right). For all $k \in \sqbr{4}$, every component $\tX_k(\ivi)$ denotes the frequency of $\ivi$ in relation $R_k$.}
    \label{fig:example-graph-query}
\end{figure}
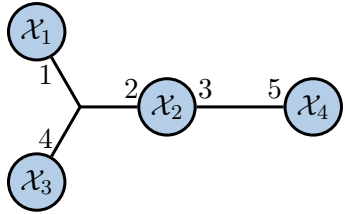

The join size of a query containing equi-joins can be expressed as a TNC, in which each relation is represented by a tensor and the joins specify the contractions \cite{alon1999tracking, dobra2002processing, heddes2024convolution}.
Join size estimation can therefore be formulated as an ATNC problem.
Consequently, our results yield improved bounds for join size estimation in the case of acyclic queries, and generalize previous results to cyclic queries as stated in Corollary~\ref{corollary:multi-join-card-est}.
An example SQL query and its corresponding tensor network diagram are provided in Figure~\ref{fig:example-graph-query}.

\begin{corollary}
\label{corollary:multi-join-card-est}
For every $p \in \nats^{+}$, any query of equi-joins of relations $R_k$ for $k \in \sqbr{p}$, and every $\epsilon, \delta > 0$, there exists an $(\epsilon, \delta)$-approximation of the join size (as in Definition~\ref{def:approx-tensor-contraction}) that can be computed in time $O((pm \log m + q N) \log 1/\delta)$ using $O(m p  \log 1/\delta)$ space, with {\bf (1)} $m = \Omega(t/\epsilon^2))$ when the query is acyclic and {\bf (2)} $m = \Omega(3^t/\epsilon^2)$ in general, where $N = \sum_{k=1}^p \abs{R_k}$, $q = \sum_{k=1}^p q_k$, and $t = \abs{E}$.
\end{corollary}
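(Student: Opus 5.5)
The plan is to reduce Corollary~\ref{corollary:multi-join-card-est} directly to Theorem~\ref{thm:errorbound}, by expressing the join size as a full TNC whose parameters match those in the corollary.

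\textbf{Encoding the query.} For each relation $R_k$ with $q_k$ attributes, I will define the order-$q_k$ frequency tensor $\tX_k$ as in Figure~\ref{fig:example-graph-query}: $\tX_k(\ivi)$ is the multiplicity of tuple $\ivi$ in $R_k$, after mapping each attribute's active domain injectively into $\sqbr{n}$. Each equi-join predicate between attribute $u$ and attribute $v$ becomes a contraction $(u,v)\in E$. Unpacking Definition~\ref{def:tensor-contraction}, the full contraction is
\[
\sum_{\ivi} \prod_k \tX_k(\ivi_{V(k)}) \prod_{(u,v)\in E}\sqbr{i_u=i_v}.
\]
This sum counts, with multiplicities, the tuple combinations that satisfy every predicate, so it equals the join size \texttt{COUNT(*)}. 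Predicates sharing an attribute, such as \texttt{R1.1 = R2.2 AND R3.4 = R2.2}, or several predicates between the same pair of relations, fall outside Definition~\ref{def:tensor-contraction} as stated. I will handle them with the transformations WLOG~1--3, which the paper notes do not increase $\nnz$, the Frobenius norm, the number of tensors, or the number of contractions.

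\textbf{Matching parameters.} Each distinct tuple of $R_k$ gives exactly one nonzero entry of $\tX_k$, so $\nnz(\tX_k)\le\abs{R_k}$ and hence $\sum_k\nnz(\tX_k)\le N$. The quantity $q=\sum_k q_k$ and the count $t=\abs{E}$ carry over unchanged. The query is acyclic exactly when its tensor network is acyclic, because the join graph and the tensor network have the same nodes and edges. The approximation guarantee is taken relative to $\norm{\tX}_{\frob}$ as in Definition~\ref{def:approx-tensor-contraction}, which is the notion the corollary refers to.

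\textbf{Concluding.} Applying Theorem~\ref{thm:errorbound} to this TNC gives the stated time, space, and sketch sizes $m$ in both the acyclic and the general setting. To implement it, each tuple is streamed as a unit update to one tensor entry, so the sketching cost is $O(qN)$.

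The main point needing care is the shared-attribute case, where a single mode takes part in several predicates. I would first apply the virtual-copy construction of WLOG~1, replacing such a mode with a diagonal tensor. That construction, however, adds a tensor, and the paper's claim that no extra tensors are introduced is then questionable. As an alternative that keeps $p$ fixed, I would instead represent the shared attribute as separate modes of $\tX_k$: add a duplicated column to the relation, so that the frequency tensor becomes supported on the diagonal. This leaves $\nnz$ and the norm unchanged, increases $q_k$ only by a constant per shared use, and gives each mode exactly one contraction.
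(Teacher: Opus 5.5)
Your reduction is the same as the paper's: encode each relation as its frequency tensor, take the equi-join predicates as the contractions, note that the linear sketches can be built in one pass over each relation, and invoke Theorem~\ref{thm:errorbound}. Your worry about WLOG~1 is unfounded, though: the paper's virtual copy replaces $b$ by the diagonal tensor $B(i,j) = b(i)\sqbr{i=j}$ instead of adding a new node (the network goes from $a \times b \times c$ to $a \times B \times c$), which is exactly your duplicated-column construction, so $p$ is unchanged and only $q_k$ grows.
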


\begin{proof}
    Let $\tX_k$ be the frequency tensor of relation $R_k$, such that for all $k \in \sqbr{p}$, every component $\tX_k(\ivi)$ denotes the frequency of $\ivi$ in relation $R_k$, that is, $\tX_k(\ivi) = \sum_{\ivj \in R_k} \sqbr{\ivi = \ivj}$; and let $E$ be the set of equi-joins. Since the sketches are linear, the sketch of each frequency tensor can be computed in one pass over the relation. The results then follow directly from Theorem~\ref{thm:errorbound}.
\end{proof}

\subsection{Triangle counting}

Even for combinatorial problems for which tractable algorithms are known, ATNC can prove to be useful. Take, for instance, the task of counting triangles in graphs.
Given the emergence of applications involving massive graphs, even polynomial-time algorithms can become prohibitively expensive. Since the triangle count can be formulated as $\tr(AAA)$, with adjacency matrix $A$, it can also be approximated using ATNC. Efficient sketching-based algorithms for counting triangles have previously been explored, for example, in \cite{jowhari2005new}. In fact, the AMS sketch has inspired several such efficient algorithms for graph problems \cite{henzinger1998computing,feigenbaum2005graph,feigenbaum2004graph,feigenbaum2005graphSODA}. Our result for triangle counting presented in Corollary~\ref{corollary:triangle-count} matches the space bound of the second estimator in~\cite{jowhari2005new}, but improves upon their time complexity of $O(mN \log 1/\delta)$. Moreover, the second estimator in \cite{jowhari2005new} needs 12-wise independent hash functions, whereas our method relies on at most 4-wise independent hash functions, which are significantly more efficient in practice.

\begin{corollary}
\label{corollary:triangle-count}
For any directed graph $G$ with $n$ nodes, $N$ edges, adjacency matrix $A$, and any $\epsilon, \delta > 0$, the number of triangles in $G$ can be computed with at most $\epsilon$ absolute error in a single pass over the edges using Algorithm~\ref{alg:general-ftc} in time $O((m \log m + N) \log 1/\delta)$ using $O(m \log 1/\delta)$ space with $m = \Omega(N^3/\epsilon^2)$ with probability at least $1 - \delta$.
\end{corollary}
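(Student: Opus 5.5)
We realize the triangle count as a full TNC and then apply case (2) of Theorem~\ref{thm:errorbound}. Take three copies of the adjacency matrix, $\tX_1=\tX_2=\tX_3=A\in\reals^{n\times n}$, with contractions $E=\set{(2,3),(4,5),(6,1)}$. By Definition~\ref{def:tensor-contraction},
\begin{align*}
\tc(A\times A\times A,E)=\sum_{i,j,k} A(i,j)A(j,k)A(k,i)=\tr(AAA).
\end{align*}
For a directed graph this counts directed 3-cycles, each one once per choice of starting vertex; dividing by the constant $3$ converts this to the triangle count, and the same factor only rescales $\epsilon$. The network is exactly the cyclic network of Example~\ref{example:method-1}, with $p=3$, $t=3$, and $q=6$. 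It already satisfies WLOG~1--4, so Algorithm~\ref{alg:general-ftc} applies directly.

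Next we check the parameters in Theorem~\ref{thm:errorbound}. Here $N=\sum_k \nnz(\tX_k)=3\nnz(A)=3N_{\mathrm{edges}}$ and $t=3$, so $3^t=27$ is a constant. The theorem therefore gives time $O((m\log m+N)\log 1/\delta)$ and space $O(m\log 1/\delta)$. For the required $m$, the error guarantee is relative to $\norm{\tX}_\frob=\prod_k\norm{\tX_k}_\frob=\norm{A}_\frob^3=N^{3/2}$, since $A$ is 0/1 with $N$ ones. To get absolute error $\epsilon$ we invoke the theorem with relative accuracy $\epsilon'=\epsilon/N^{3/2}$. This gives $m=\Omega(3^t/\epsilon'^2)=\Omega(N^3/\epsilon^2)$, matching the claim.

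An alternative is to run Lemma~\ref{lemma:general-expectation-variance} directly. The variance is at most $27N^3/m$, so Chebyshev with $m=\Omega(N^3/\epsilon^2)$ gives constant success probability, and the median trick boosts this to $1-\delta$.

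The single-pass property comes from linearity of the sketches $x_k$. On each arriving edge $(i,j)$ we add $e_i\otimes e_j$ to all three copies, and Algorithm~\ref{alg:general-ftc} line~\ref{alg-line:sketch} updates each $x_k$ in $O(q_k)=O(1)$ time. The three sketches share count sketch matrices $C_u,C_u'$ but use different mode assignments, so we keep three separate $m$-vectors. After the pass we take FFTs and form the Hadamard product in $O(m\log m)$ time.

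The main point to check is that reusing the same data matrix $A$ in all three tensors does not break the analysis. It does not: Lemma~\ref{lemma:general-expectation-variance} holds for arbitrary input tensors, and its randomness comes only from the count sketches, which are independent per contraction. The tensors themselves are deterministic, so equal inputs are harmless.
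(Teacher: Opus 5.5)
Your proposal is correct and matches the paper's proof. Like the paper, you set $\tX_1=\tX_2=\tX_3=A$ with the cyclic contraction pattern and apply case (2) of Theorem~\ref{thm:errorbound} at relative accuracy $\epsilon/N^{3/2}$ (since $\norm{A}_{\frob}^3=N^{3/2}$), which gives $m=\Omega(N^3/\epsilon^2)$; your additional details (the factor $3$ for directed $3$-cycles, the single-pass update via linearity of the sketches, and that reusing $A$ is harmless) are correct refinements the paper leaves implicit.
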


\begin{proof}
    Let $\tX_1, \tX_2, \tX_3 = A$ and $E = \set{(1, 6), (2, 3), (4, 5)}$. Then the result follows directly from Theorem~\ref{thm:errorbound}. Note that there is an additional $N^3$ term in the bound on $m$, in order to remove the relative term $\norm{\tX}_{\frob} = \norm{A}_{\frob}^3 = N^{3/2}$ from Definition~\ref{def:approx-tensor-contraction}.  
\end{proof}

\section*{Acknowledgment}

We thank Ioannis Panageas for his valuable feedback.

\bibliographystyle{alpha}
\bibliography{references}

@String{Computing = "Computing" }

@String{Computer = "{IEEE} Computer" }

@String{Springer = "Springer-Verlag" }

@inproceedings{alon1999tracking,
  title={Tracking join and self-join sizes in limited storage},
  author={Alon, Noga and Gibbons, Phillip B and Matias, Yossi and Szegedy, Mario},
  booktitle={Proceedings of the eighteenth ACM SIGMOD-SIGACT-SIGART Symposium on Principles of Database Systems},
  pages={10--20},
  year={1999}
}

@inproceedings{dobra2002processing,
  title={Processing complex aggregate queries over data streams},
  author={Dobra, Alin and Garofalakis, Minos and Gehrke, Johannes and Rastogi, Rajeev},
  booktitle={Proceedings of the 2002 ACM SIGMOD International Conference on Management of Data},
  pages={61--72},
  year={2002}
}

@inproceedings{alon1996space,
  title={The space complexity of approximating the frequency moments},
  author={Alon, Noga and Matias, Yossi and Szegedy, Mario},
  booktitle={ACM Symposium on Theory of computing (STOC)},
  pages={20--29},
  year={1996}
}

@inproceedings{cormode2005sketching,
  title={Sketching streams through the net: Distributed approximate query tracking},
  author={Cormode, Graham and Garofalakis, Minos},
  booktitle={International Conference on Very large Data Bases (VLDB)},
  pages={13--24},
  year={2005}
}

@inproceedings{charikar2002finding,
  title={Finding frequent items in data streams},
  author={Charikar, Moses and Chen, Kevin and Farach-Colton, Martin},
  booktitle={International Colloquium on Automata, Languages, and Programming (ICALP)},
  pages={693--703},
  year={2002},
  organization={Springer}
}

@article{leis2015good,
  title={How good are query optimizers, really?},
  author={Leis, Viktor and Gubichev, Andrey and Mirchev, Atanas and Boncz, Peter and Kemper, Alfons and Neumann, Thomas},
  journal={Proceedings of the VLDB Endowment},
  volume={9},
  number={3},
  pages={204--215},
  year={2015},
  publisher={VLDB Endowment}
}

@article{pagh2013compressed,
  title={Compressed matrix multiplication},
  author={Pagh, Rasmus},
  journal={ACM Transactions on Computation Theory (TOCT)},
  volume={5},
  number={3},
  pages={1--17},
  year={2013},
  publisher={ACM New York, NY, USA}
}

@article{wang2015fast,
  title={Fast and guaranteed tensor decomposition via sketching},
  author={Wang, Yining and Tung, Hsiao-Yu and Smola, Alexander J and Anandkumar, Anima},
  journal={Advances in Neural Information Processing Systems (NeurIPS)},
  volume={28},
  year={2015}
}

@inproceedings{pham2013fast,
  title={Fast and scalable polynomial kernels via explicit feature maps},
  author={Pham, Ninh and Pagh, Rasmus},
  booktitle={ACM SIGKDD international conference on Knowledge discovery and data mining},
  pages={239--247},
  year={2013}
}

@inproceedings{weinberger2009feature,
  title={Feature hashing for large scale multitask learning},
  author={Weinberger, Kilian and Dasgupta, Anirban and Langford, John and Smola, Alex and Attenberg, Josh},
  booktitle={International Conference on Machine Learning (ICML)},
  pages={1113--1120},
  year={2009}
}

@inproceedings{heddes2024convolution,
  title={Convolution and Cross-Correlation of Count Sketches Enables Fast Cardinality Estimation of Multi-Join Queries},
  author={Heddes, Mike and Nunes, Igor  and Givargis, Tony and Nicolau, Alex},
  booktitle={Proceedings of the 2024 ACM SIGMOD International Conference on Management of Data},
  year={2024}
}

@inproceedings{ahle2020oblivious,
  title={Oblivious sketching of high-degree polynomial kernels},
  author={Ahle, Thomas D and Kapralov, Michael and Knudsen, Jakob BT and Pagh, Rasmus and Velingker, Ameya and Woodruff, David P and Zandieh, Amir},
  booktitle={ACM-SIAM Symposium on Discrete Algorithms (SODA)},
  pages={141--160},
  year={2020},
  organization={SIAM},
    note={See also the extended arXiv preprint arXiv:1909.01410}
}

@article{avron2014subspace,
  title={Subspace embeddings for the polynomial kernel},
  author={Avron, Haim and Nguyen, Huy and Woodruff, David},
  journal={Advances in Neural Information Processing Systems (NeurIPS)},
  volume={27},
  year={2014}
}

@book{golub2013matrix,
  title={Matrix computations},
  author={Golub, Gene H and Van Loan, Charles F},
  year={2013},
  publisher={JHU press}
}

@article{bridgeman2017hand,
  title={Hand-waving and interpretive dance: an introductory course on tensor networks},
  author={Bridgeman, Jacob C and Chubb, Christopher T},
  journal={Journal of physics A: Mathematical and theoretical},
  volume={50},
  number={22},
  pages={223001},
  year={2017},
  publisher={IOP Publishing}
}

@inproceedings{jowhari2005new,
  title={New streaming algorithms for counting triangles in graphs},
  author={Jowhari, Hossein and Ghodsi, Mohammad},
  booktitle={Computing and Combinatorics: 11th Annual International Conference, COCOON 2005 Kunming, China, August 16--19, 2005 Proceedings 11},
  pages={710--716},
  year={2005},
  organization={Springer}
}

@article{arad2010quantum,
  title={Quantum computation and the evaluation of tensor networks},
  author={Arad, Itai and Landau, Zeph},
  journal={SIAM Journal on Computing},
  volume={39},
  number={7},
  pages={3089--3121},
  year={2010},
  publisher={SIAM}
}

@article{henzinger1998computing,
  title={Computing on data streams.},
  author={Henzinger, Monika Rauch and Raghavan, Prabhakar and Rajagopalan, Sridhar},
  journal={External memory algorithms},
  volume={50},
  pages={107--118},
  year={1998}
}

@article{feigenbaum2005graph,
  title={On graph problems in a semi-streaming model},
  author={Feigenbaum, Joan and Kannan, Sampath and McGregor, Andrew and Suri, Siddharth and Zhang, Jian},
  journal={Theoretical Computer Science},
  volume={348},
  number={2-3},
  pages={207--216},
  year={2005},
  publisher={Elsevier}
}

@inproceedings{feigenbaum2004graph,
  title={On graph problems in a semi-streaming model},
  author={Feigenbaum, Joan and Kannan, Sampath and McGregor, Andrew and Suri, Siddharth and Zhang, Jian},
  booktitle={International Colloquium on Automata, Languages, and Programming (ICALP)},
  pages={531--543},
  year={2004},
  organization={Springer}
}

@inproceedings{feigenbaum2005graphSODA,
  title={Graph distances in the streaming model: the value of space.},
  author={Feigenbaum, Joan and Kannan, Sampath and McGregor, Andrew and Suri, Siddharth and Zhang, Jian},
  booktitle={ACM-SIAM Symposium on Discrete Algorithms (SODA)},
  volume={5},
  pages={745--754},
  year={2005}
}

@article{oseledets2011tensor,
  title={Tensor-train decomposition},
  author={Oseledets, Ivan V},
  journal={SIAM Journal on Scientific Computing},
  volume={33},
  number={5},
  pages={2295--2317},
  year={2011},
  publisher={SIAM}
}

@article{novikov2015tensorizing,
  title={Tensorizing neural networks},
  author={Novikov, Alexander and Podoprikhin, Dmitrii and Osokin, Anton and Vetrov, Dmitry P},
  journal={Advances in Neural Information Processing Systems (NeurIPS)},
  volume={28},
  year={2015}
}

@article{woodruff2014sketching,
  title={Sketching as a tool for numerical linear algebra},
  author={Woodruff, David P},
  journal={Foundations and Trends{\textregistered} in Theoretical Computer Science},
  volume={10},
  number={1--2},
  pages={1--157},
  year={2014},
  publisher={Now Publishers, Inc.}
}

@article{ibaraki1984optimal,
  title={On the optimal nesting order for computing n-relational joins},
  author={Ibaraki, Toshihide and Kameda, Tiko},
  journal={ACM Transactions on Database Systems (TODS)},
  volume={9},
  number={3},
  pages={482--502},
  year={1984},
  publisher={ACM New York, NY, USA}
}

@article{markov2008simulating,
  title={Simulating quantum computation by contracting tensor networks},
  author={Markov, Igor L and Shi, Yaoyun},
  journal={SIAM Journal on Computing},
  volume={38},
  number={3},
  pages={963--981},
  year={2008},
  publisher={SIAM}
}

@article{robeva2019duality,
  title={Duality of graphical models and tensor networks},
  author={Robeva, Elina and Seigal, Anna},
  journal={Information and Inference: A Journal of the IMA},
  volume={8},
  number={2},
  pages={273--288},
  year={2019},
  publisher={Oxford University Press}
}

@article{wainwright2008graphical,
  title={Graphical models, exponential families, and variational inference},
  author={Wainwright, Martin J and Jordan, Michael I and others},
  journal={Foundations and Trends{\textregistered} in Machine Learning},
  volume={1},
  number={1--2},
  pages={1--305},
  year={2008},
  publisher={Now Publishers, Inc.}
}

@phdthesis{deeds2025data,
  title={Data-Aware Complexity Analysis and Program Optimization},
  author={Deeds, Kyle},
  year={2025},
  school={University of Washington}
}

@InProceedings{mahankali2024near,
  author =	{Mahankali, Arvind V. and Woodruff, David P. and Zhang, Ziyu},
  title =	{{Near-Linear Time and Fixed-Parameter Tractable Algorithms for Tensor Decompositions}},
  booktitle =	{15th Innovations in Theoretical Computer Science Conference (ITCS 2024)},
  pages =	{79:1--79:23},
  series =	{Leibniz International Proceedings in Informatics (LIPIcs)},
  ISBN =	{978-3-95977-309-6},
  ISSN =	{1868-8969},
  year =	{2024},
  volume =	{287},
  editor =	{Guruswami, Venkatesan},
  publisher =	{Schloss Dagstuhl -- Leibniz-Zentrum f{\"u}r Informatik},
  address =	{Dagstuhl, Germany},
  URL =		{https://drops.dagstuhl.de/entities/document/10.4230/LIPIcs.ITCS.2024.79},
  URN =		{urn:nbn:de:0030-drops-196078},
  doi =		{10.4230/LIPIcs.ITCS.2024.79}
}

@inproceedings{ma2021fast,
 author = {Ma, Linjian and Solomonik, Edgar},
 booktitle = {Advances in Neural Information Processing Systems},
 editor = {M. Ranzato and A. Beygelzimer and Y. Dauphin and P.S. Liang and J. Wortman Vaughan},
 pages = {24299--24312},
 publisher = {Curran Associates, Inc.},
 title = {Fast and accurate randomized algorithms for low-rank tensor decompositions},
 url = {https://proceedings.neurips.cc/paper_files/paper/2021/file/cbef46321026d8404bc3216d4774c8a9-Paper.pdf},
 volume = {34},
 year = {2021}
}

@inproceedings{malik2018low,
 author = {Malik, Osman Asif and Becker, Stephen},
 booktitle = {Advances in Neural Information Processing Systems},
 editor = {S. Bengio and H. Wallach and H. Larochelle and K. Grauman and N. Cesa-Bianchi and R. Garnett},
 pages = {},
 publisher = {Curran Associates, Inc.},
 title = {Low-Rank Tucker Decomposition of Large Tensors Using TensorSketch},
 url = {https://proceedings.neurips.cc/paper_files/paper/2018/file/45a766fa266ea2ebeb6680fa139d2a3d-Paper.pdf},
 volume = {31},
 year = {2018}
}

@article{jayram2008one,
 author = {Jayram, T. S. and Kumar, Ravi and Sivakumar, D.},
 title = {The One-Way Communication Complexity of Hamming Distance},
 year = {2008},
 pages = {129--135},
 doi = {10.4086/toc.2008.v004a006},
 publisher = {Theory of Computing},
 journal = {Theory of Computing},
 volume = {4},
 number = {6},
 URL = {https://theoryofcomputing.org/articles/v004a006},
}

\appendix

\clearpage
\section{Proofs}
\label{sec:proofs}

This appendix contains proofs for the results presented in the paper. 

\subsection{Proof of Lemma~\ref{lemma:complement-count-sketch}}
\label{sec-proof-lemma-complement-count-sketch}

\begin{proof}
We first use the definition of the discrete Fourier transform and the count sketch to obtain
    \begin{align*}
        \rdbr{\overline{\dft C x}}_k &= \overline{\sum_{j=1}^m e^{-i 2\pi \frac{k-1}{m}(j-1)} \sum_{i=1}^n x_i s(i) \sqbr{h(i)=j}}\\
        &= \sum_{j=1}^m e^{i 2\pi \frac{k-1}{m}(j-1)} \sum_{i=1}^n x_i s(i) \sqbr{h(i)=j}.
\intertext{This uses the following properties of the complex conjugate: $\overline{x + y} = \overline{x} + \overline{y}$, $\overline{xy} = \overline{x} \, \overline{y}$, $\overline{\exp(i)} = \exp(-i)$, and $\overline{x} = x$ when $x \in \reals$. Using the definition of the complement count sketch, we have}
        \rdbr{\dft C' x}_k &= \sum_{l=1}^m e^{-i 2\pi \frac{k-1}{m}(l-1)} \sum_{i=1}^n x_i s(i) \sqbr{-h(i)+2 \equiv l \,(\mathrm{mod}\, m)}.\\
\intertext{Using a change of variable, setting $l = -j + 2$, we obtain}
        \rdbr{\dft C' x}_k &= \sum_{j=1}^m e^{-i 2\pi \frac{k-1}{m}(-j + 2 -1)} \sum_{i=1}^n x_i s(i) \sqbr{-h(i) + 2 \equiv -j + 2 \,(\mathrm{mod}\, m)}\\
        &= \sum_{j=1}^m e^{i 2\pi \frac{k-1}{m}(j-1)} \sum_{i=1}^n x_i s(i) \sqbr{h(i)=j} = \rdbr{\overline{\dft C x}}_k,
     \end{align*}
     which concludes the proof.
\end{proof}

\subsection{Proof of Lemma~\ref{lemma:general-expectation-variance}}
\label{sec:proof-lemma-general-expectation-variance}

\begin{proof}
First, note that every tensor can be written as a linear sum of the tensor product of standard basis vectors: $\tX_k = \sum_{\ivi_k = \ivone}^{\ivn_k} \etX_k(\ivi_k) \bigtimes_{u \in V(k)} e_{i_u}$, where $\bigtimes_{u \in S}$ denotes the tensor product for each member $u$ of $S$. Since $x_k = \rdbr*{\bm{\bullet}_{u \in V(k)} \dft C_u}\vecop(\tX_k)$, by linearity and the mixed-product identity $(A \bullet B)(x \otimes y) = (Ax) \circ (By)$,
\begin{align*}
    x_k = \rdbr*{\underset{u \in V(k)}{\bm{\bullet}} \dft C_u} \rdbr*{\sum_{\ivi_k = \ivone}^{\ivn_k} \etX_k(\ivi_k) \bigotimes_{u \in V(k)} e_{i_u}} = \sum_{\ivi_k = \ivone}^{\ivn_k} \etX_k(\ivi_k) \rdbr*{\underset{u \in V(k)}{\bm{\circ}} \dft C_u e_{i_u}},
\end{align*}
where $\bm{\circ}_{u \in S}$ is the Hadamard product for each member $u$ of $S$. This expresses the sketch $x_k$ of tensor $\mathcal{X}_k$ as a sum over its entries. Note that $\bm{\bullet}_{u \in V(k)} \dft C_u$ is the Fourier transform of a tensor sketch matrix and $\vecop(\tX_k) = \sum_{\ivi_k = \ivone}^{\ivn_k} \etX_k(\ivi_k) \bigotimes_{u \in V(k)} e_{i_u}$.
The estimate is then obtained by combining the sketches with the Hadamard product as follows: 
\begin{align}
    &\hat{\mathrm{tc}}(\mathcal{X}_1 \times \cdots \times \mathcal{X}_p, E) = e_1 \tran \idft (x_1 \circ \cdots \circ x_p)\\ 
    &= \sum_{\mathbf{i} = \bm{1}}^{\mathbf{n}} \mathcal{X}_1(\mathbf{i}_1) \cdots \mathcal{X}_p(\mathbf{i}_p) e_1 \tran \idft  \left( \underset{u \in [q]}{\bm{\circ}} F C_u e_{i_u} \right) \\
    &= \sum_{\mathbf{i} = \bm{1}}^{\mathbf{n}} \mathcal{X}_1(\mathbf{i}_1) \cdots \mathcal{X}_p(\mathbf{i}_p) e_1 \tran \idft  \left(\left( \underset{(u,v) \in E}{\bm{\circ}} F C_u e_{i_u}\right) \circ \left(\underset{(u,v) \in E}{\bm{\circ}}  F C_v e_{i_v} \right)\right) \label{eq:split}\\
    &= \sum_{\mathbf{i} = \bm{1}}^{\mathbf{n}} \mathcal{X}_1(\mathbf{i}_1) \cdots \mathcal{X}_p(\mathbf{i}_p) e_1 \tran \idft  \left(\left( \underset{(u,v) \in E}{\bm{\circ}} F C_u e_{i_u}\right) \circ \left(\underset{(u,v) \in E}{\bm{\circ}}  F C'_u e_{i_v} \right)\right)\\
    &= \sum_{\mathbf{i} = \bm{1}}^{\mathbf{n}} \mathcal{X}_1(\mathbf{i}_1) \cdots \mathcal{X}_p(\mathbf{i}_p) e_1 \tran \idft  \left( \left(\underset{(u,v) \in E}{\bm{\circ}} F C_u e_{i_u} \right) \circ \overline{\left(\underset{(u,v) \in E}{\bm{\circ}}  F C_u e_{i_v} \right)} \right) \label{eq:conjugated}\\
    &= \frac{1}{m} \sum_{\mathbf{i} = \bm{1}}^{\mathbf{n}} \mathcal{X}_1(\mathbf{i}_1) \cdots \mathcal{X}_p(\mathbf{i}_p) \left\langle \underset{(u,v) \in E}{\bm{\circ}} F C_u e_{i_u}, \underset{(u,v) \in E}{\bm{\circ}}  F C_u e_{i_v} \right\rangle \label{eq:inner}\\
    &= \frac{1}{m} \sum_{\mathbf{i} = \bm{1}}^{\mathbf{n}} \mathcal{X}_1(\mathbf{i}_1) \cdots \mathcal{X}_p(\mathbf{i}_p) \left\langle F T \bigotimes_{(u,v) \in E} e_{i_u}, F T \bigotimes_{(u,v) \in E} e_{i_v} \right\rangle \label{eq:tensor-sketch}\\
    &= \sum_{\mathbf{i} = \bm{1}}^{\mathbf{n}} \mathcal{X}_1(\mathbf{i}_1) \cdots \mathcal{X}_p(\mathbf{i}_p) \left\langle T \bigotimes_{(u,v) \in E} e_{i_u}, T \bigotimes_{(u,v) \in E} e_{i_v} \right\rangle \label{eq:normal-mat}.
\end{align}
Eq.~(\ref{eq:split}) divides the count sketches and the complement count sketches into two groups. Note that for every contraction, a count sketch is assigned to one mode and its complement to the other, so there is an equal number of them. Eq.~(\ref{eq:conjugated}) uses Lemma~\ref{lemma:complement-count-sketch} and Eq.~(\ref{eq:inner}) uses $e_1^{T}F^{-1}(x \circ \overline{y}) = \frac{1}{m}\langle x, y \rangle$. This step of the proof requires each contraction to have one conjugated mode which is achieved using the complement count sketches. Eq.~(\ref{eq:tensor-sketch}) uses the fact that Eq.~(\ref{eq:inner}) includes a Fourier transform of a tensor sketch. Eq.~(\ref{eq:normal-mat}) uses the invariance of the inner product to a unitary transformation, given that $F$ is a unitary matrix times $\sqrt{m}$.
The expectation of the estimator is given by
\begin{gather}
    \mathbb{E}[\hat{\mathrm{tc}}(\mathcal{X}_1 \times \cdots \times \mathcal{X}_p, E)] = \sum_{\mathbf{i} = \bm{1}}^{\mathbf{n}} \mathcal{X}_1(\mathbf{i}_1) \cdots \mathcal{X}_p(\mathbf{i}_p) \mathbb{E}\left\langle T \bigotimes_{(u,v) \in E} e_{i_u}, T \bigotimes_{(u,v) \in E} e_{i_v} \right\rangle\\
    \mathbb{E}\left\langle T \bigotimes_{(u,v) \in E} \mkern-5mu e_{i_u}, T \bigotimes_{(u,v) \in E} \mkern-5mu e_{i_v} \right\rangle = \left\langle \bigotimes_{(u,v) \in E} \mkern-5mu e_{i_u}, \bigotimes_{(u,v) \in E} \mkern-5mu e_{i_v} \right\rangle = \prod_{(u,v) \in E} \mkern-9mu \left\langle  e_{i_u},  e_{i_v} \right\rangle = \prod_{(u,v) \in E} \mkern-9mu [i_u = i_v] \label{eq:iverson-brackets}.
\end{gather}
The expected value in Eq.~\ref{eq:iverson-brackets} and the variance bound follow directly from Lemma~\ref{lemma:tensor-sketch}. The squared Frobenius norms for each tensor in the variance bound stem from the sum of tensor components. Note that in our proof, we do not assume anything about the structure of the tensor network. 
\end{proof}

\subsection{Proof of Lemma~\ref{lemma:exponential-dependence}}
\label{sec:proof-lemma-exponential-dependence}

\begin{proof}
    Since $\vecop(\tX_i) = \vone_n \otimes \vone_n$ for $2 \leq i \leq q$, we use the convolution theorem $\dft(x \star y) = \overline{\dft x} \circ \dft y$ and the mixed-product identity $(A \bullet B)(x \otimes y) = (Ax) \circ (By)$ to obtain
    \begin{align*}
        y = e_1\tran\idft((\dft C_1 \vone) \circ \overline{(\dft C_1\vone) \circ (\dft C_2\vone)} \circ \cdots \circ \overline{(\dft C_{q-1}\vone) \circ (\dft C_q\vone)} \circ (\dft C_{q} \vone)),
    \end{align*}
    where the complex conjugate is taken over every other pair of count sketches. We can use the associativity of the Hadamard product to rearrange the expression so that all the conjugated terms are on one side. Note that $e_1\tran\idft(\overline{x} \circ y) = \frac{1}{m}\vone\tran(\overline{x} \circ y) = \frac{1}{m}\anbr{x, y}$, and $\norm{x}^2_2 = \anbr{x, x}$, giving
    \begin{align*}
        \frac{1}{m}\anbr{(\dft C_1\vone) \circ \cdots \circ (\dft C_q\vone), (\dft C_1\vone) \circ \cdots \circ (\dft C_q\vone)}
        = \frac{1}{m}\norm{(\dft C_1\vone) \circ \cdots \circ (\dft C_q\vone)}_2^2.
    \end{align*}
    We now apply \citep[Lemma 43]{ahle2020oblivious}, since $\norm{T(x^{\otimes q})}_2^2 = \frac{1}{m}\norm{(\dft C_1\vone) \circ  \cdots \circ (\dft C_q\vone)}_2^2$ because $\idft$ is a unitary matrix times $\frac{1}{\sqrt{m}}$. Lastly, note that $\norm{x^{\otimes q}}_2^4 = \prod_{i=1}^{q + 1} \norm{\tX_i}_{\frob}^2$, concluding the proof.
\end{proof}

\subsection{Proof of Lemma~\ref{lemma:recursive-AFTNC}}
\label{sec:proof-lemma-recursive-AFTNC}

\begin{proof}
    We can assume that the TNC satisfies the WLOG assumptions from Section~\ref{sec:wlog}.
    Then, by expanding $g(o)$ and the definition of the inner product, we have
    \begin{align*}
         \vecop(\tX_o) \tran g(o) &= \sum_{\ivi_o = \ivone}^{\ivn} \tX_o(\ivi_o) \rdbr*{\bigotimes_{l_1 \in \Gamma(o)} \mat(\tX_{l_1})g(l_1)}(\ivi_o),
     \end{align*}
     where $\ivi_k$ contains all the indices of tensor $\tX_k$.
     Let $\ivo_k$ contain all the indices associated with outgoing edges of tensor $\tX_k$, and let $\ivl_i$ contain all the indices of all the tensors at a distance $i$ from the root, also called the $i$th level of the tree. Note that $\ivi_o = \ivl_1$ since the indices of the root tensor are all the indices in the first level. Let $E(l)$ denote the subset of contractions that are part of the first $l$ levels of the tree. We then obtain the following:
     \begin{align*}    
         \vecop(\tX_o) \tran g(o) &= \sum_{\ivl_1 = \ivone}^{\ivn} \sum_{\ivl_2 = \ivone}^{\ivn} \tX_o(\ivi_o) \rdbr*{\prod_{l_1 \in \Gamma(o)} \tX_{l_1}(\ivi_{l_1}) g(l_1)(\ivo_{l_1})} \prod_{(u, v) \in E(1)} \sqbr{i_u = i_v},
    \end{align*}
    where the indices in $\ivi_o$ are doubled with the introduction of $\ivi_{l_1}$. The added Iverson brackets ensures that the two indices are equal.
    We can then repeat this expansion of the recursion for all the $w$ levels of the tree, giving us
    \begin{align*}
         \vecop(\tX_o) \tran g(o) &= \sum_{\ivl_1 = \ivone}^{\ivn} \sum_{\ivl_2 = \ivone}^{\ivn} \cdots \sum_{\ivl_{w} = \ivone}^{\ivn} \tX_o(\ivi_o) \rdbr*{\prod_{l_1 \in \Gamma(o)} \tX_{l_1}(\ivi_{l_1}) \rdbr*{\prod_{l_2 \in \Gamma(l_1)} \tX_{l_2}(\ivi_{l_2}) \dots }} \prod_{(u, v) \in E} \sqbr*{i_u = i_v}\\
         &= \sum_{\ivi = \ivone}^{\ivn} \tX_1(\ivi_1) \cdots \tX_p(\ivi_p) \prod_{(u,v) \in E} \sqbr*{i_u = i_v},
    \end{align*}
    where $\ivi$ contains all the indices since this is a full TNC, concluding the proof.
\end{proof}

\subsection{Proof of Lemma~\ref{lemma:acyclic-expectation-variance}}
\label{sec:proof-lemma-acyclic-expectation-variance}

\begin{proof}
    The proof proceeds by first establishing the expectation, followed by the variance bound.
    \paragraph{Expectation.} Note that all the recursive sketch matrices $R_1, \dots, R_p$ are independent from each other. Then, by the law of total expectation, Lemma~\ref{lemma:recursive-sketch}, and with $x_o = \vecop(\tX_o)$,
    \begin{align*}
        \E\sqbr{y} = \E\sqbr*{x_o\tran R_o\tran R_o \hat{g}(o)} = \E\sqbr*{\E\sqbr*{x_o\tran R_o\tran R_o \hat{g}(o) \middle| \hat{g}(o)} }
        = x_o\tran \rdbr*{\bigotimes_{l \in \Gamma(o)} \E\sqbr*{\mat(\tX_l)R_l\tran R_l \hat{g}(l)}}.
    \end{align*}
    It follows that $\E\sqbr{ x_o\tran R_o\tran R_o \hat{g}(o)} = x_o \tran g(o)$ since $\E\sqbr*{\mat(\tX_k)R_k\tran R_k \hat{g}(k) \middle| \hat{g}(k)} = \mat(\tX_k)\hat{g}(k)$ for any $k$, which matches the result in Lemma~\ref{lemma:recursive-AFTNC}.

    \paragraph{Variance bound.} By the law of total variance, Lemma~\ref{lemma:recursive-sketch}, and with $x_o = \vecop(\tX_o)$,
\begin{align*}
    \Var\rdbr*{\hat{\tc}(\tX_1 \times \cdots \times \tX_p, E)} &= \E\sqbr*{\Var\rdbr*{x_o\tran R_o\tran R_o \hat{g}(o) \middle| \hat{g}(o)}} + \Var\rdbr*{\E\sqbr*{x_o\tran R_o\tran R_o \hat{g}(o) \middle| \hat{g}(o)}}\\
    &= \rdbr*{\!\rdbr*{1 + \frac{8}{m}}^{2q_o} - 1}\norm{\tX_o}^2_{\frob}\E\sqbr*{\norm{\hat{g}(o)}^2_2} + \Var\rdbr*{x_o\tran \hat{g}(o)}.
\end{align*}
For the norm of $\hat{g}$, with $M_l = \mat(\tX_l)$, we have that
\begin{align*}
    \E\sqbr*{\norm{\hat{g}(k)}^2_2 \middle| \hat{g}(l) \; \forall l \in \Gamma(k)} &= \prod_{l \in \Gamma(k)} \E\sqbr*{\norm{\mat(\tX_l) R_l\tran R_l \hat{g}(l)}^2_2 \middle| \hat{g}(l)}\\
    &= \prod_{l \in \Gamma(k)} \sum_{i=1}^n \E\sqbr*{\rdbr*{M_l(i, :) R_l\tran R_l \hat{g}(l)}^2 \middle| \hat{g}(l)}\\
    &= \prod_{l \in \Gamma(k)} \sum_{i=1}^n \Var\rdbr*{M_l(i, :) R_l\tran R_l \hat{g}(l) \middle| \hat{g}(l)} + \E\sqbr*{M_l(i, :) R_l\tran R_l \hat{g}(l) \middle| \hat{g}(l)}^2.
\intertext{Using Lemma~\ref{lemma:recursive-sketch} and the Cauchy-Schwarz inequality we get the following bound:}
    \E\sqbr*{\norm{\hat{g}(k)}^2_2 \middle| \hat{g}(l) \; \forall l \in \Gamma(k)} &\leq \prod_{l \in \Gamma(k)} \rdbr*{\!\rdbr*{1 + \frac{8}{m}}^{2q_l - 2} - 1} \norm{\tX_l}^2_{\frob} \E\sqbr*{\norm{\hat{g}(l)}^2_2} + \norm{\tX_l}^2_{\frob} \E\sqbr*{\norm{\hat{g}(l)}^2_2}\\
    &= \prod_{l \in \Gamma(k)} \rdbr*{1 + \frac{8}{m}}^{2q_l - 2} \norm{\tX_l}^2_{\frob} \E\sqbr*{\norm{\hat{g}(l)}^2_2}.
\end{align*}
Going back to the remaining variance term, we have that 
\begin{align*}
    \Var\rdbr*{x_o\tran \hat{g}(o)} &= \E\sqbr*{\rdbr*{x_o\tran \rdbr*{{\textstyle\bigotimes_{l \in \Gamma(o)}} \mat(\tX_l) R_l\tran R_l \hat{g}(l)} - x_o\tran \rdbr*{{\textstyle\bigotimes_{l \in \Gamma(o)}} \mat(\tX_l) g(l)}}^2}\\
    &= \E\sqbr*{\rdbr*{x_o\tran \rdbr*{{\textstyle\bigotimes_{l \in \Gamma(o)}} \mat(\tX_l) R_l\tran R_l \hat{g}(l) - \mat(\tX_l) g(l)}}^2 }\\
    &\leq \norm{\tX_o}^2_{\frob} \, \prod_{l \in \Gamma(o)} \E\sqbr*{\norm{ \mat(\tX_l) R_l\tran R_l \hat{g}(l) - \mat(\tX_l) g(l)}_2^2 } \eqnote{(Cauchy-Schwarz)}\\
    &= \norm{\tX_o}^2_{\frob} \, \prod_{l \in \Gamma(o)} \sum_{i=1}^n \E\sqbr*{\rdbr*{ M_l(i, :) R_l\tran R_l \hat{g}(l) - M_l(i, :) g(l)}^2 }\\
    &= \norm{\tX_o}^2_{\frob} \, \prod_{l \in \Gamma(o)} \sum_{i=1}^n \Var\rdbr*{M_l(i, :) R_l\tran R_l \hat{g}(l)}\\
\intertext{We then apply the law of total variance to obtain:}
    \Var\rdbr*{x_o\tran \hat{g}(o)} &\leq \norm{\tX_o}^2_{\frob} \, \prod_{l \in \Gamma(o)} \sum_{i=1}^n \E\sqbr*{\Var\rdbr*{M_l(i, :) R_l\tran R_l \hat{g}(l) \middle| \hat{g}(l)}} + \Var\rdbr*{\E\sqbr*{M_l(i, :) R_l\tran R_l \hat{g}(l) \middle| \hat{g}(l)}}\\
    &\leq \norm{\tX_o}^2_{\frob} \, \prod_{l \in \Gamma(o)} \rdbr*{\rdbr*{1 + \frac{8}{m}}^{2q_l - 2} - 1} \norm{\tX_l}^2_{\frob}\E\sqbr*{\norm{\hat{g}(l)}^2_2} + \sum_{i=1}^n \Var\rdbr*{M_l(i, :) \hat{g}(l)}.
\end{align*}
Combining the results thus far gives us
\begin{align*}
    \Var\rdbr*{y}
    &\leq \rdbr*{\rdbr*{1 + \frac{8}{m}}^{2q_o} - 1}\norm{\tX_o}^2_{\frob} \, \prod_{l \in \Gamma(o)} \rdbr*{1 + \frac{8}{m}}^{2q_l - 2}\norm{\tX_l}^2_{\frob} \E\sqbr*{\norm{\hat{g}(l)}^2_2}\\ &+ \norm{\tX_o}^2_{\frob} \, \prod_{l \in \Gamma(o)} \rdbr*{\rdbr*{1 + \frac{8}{m}}^{2q_l - 2} - 1} \norm{\tX_l}^2_{\frob}\E\sqbr*{\norm{\hat{g}(l)}^2_2} + \sum_{i=1}^n \Var\rdbr*{M_l(i, :) \hat{g}(l)},
\end{align*}
The first term recursively aggregates all the tensor norms and $(1 + 8/m)$ coefficients by traversing the tensor diagram from the root to the leaves. In total, there are $q_o + \sum_{k=1 : k \neq o}^p q_k - 1 = q - p + 1 = t$ of such $(1 + 8/m)$ coefficients. Note that, for each leaf of the tree, $\hat{g}(l) = 1$, which has zero variance. If we set $t' = t - q_o$, then this results in the following:
\begin{align*}
     \Var\rdbr*{y} &\leq \rdbr*{1 + \frac{8}{m}}^{2t} \prod_{k=1}^p \norm{\tX_k}^2_{\frob} - \rdbr*{1 + \frac{8}{m}}^{2t'} \prod_{k=1}^p \norm{\tX_k}^2_{\frob}\\ &+ \norm{\tX_o}^2_{\frob} \, \prod_{l \in \Gamma(o)} \rdbr*{\!\rdbr*{1 + \frac{8}{m}}^{2q_l - 2} - 1}\norm{\tX_l}^2_{\frob}\E\sqbr*{\norm{\hat{g}(l)}^2_2}  + \sum_{i=1}^n \Var\rdbr*{M_l(i, :) \hat{g}(l)},
\end{align*}
The final inequality is derived from the observation that the overcounting of the norms of the leaf tensors results in the cancellation of the last two terms. However, it should be noted that we are overcounting by no less than the product of the norms, thus ultimately yielding
\begin{align*}
     \Var\rdbr*{y} &\leq  \rdbr*{1 + \frac{8}{m}}^{2t} \prod_{k=1}^p \norm{\tX_k}^2_{\frob} - \rdbr*{1 + \frac{8}{m}}^{2t'} \prod_{k=1}^p \norm{\tX_k}^2_{\frob} + \rdbr*{1 + \frac{8}{m}}^{2t'} \prod_{k=1}^p \norm{\tX_k}^2_{\frob} - \prod_{k=1}^p \norm{\tX_k}^2_{\frob},
\end{align*}
which equals to $\rdbr*{\!\rdbr*{1 + \frac{8}{m}}^{2t} - 1} \prod_{k=1}^p \norm{\tX_k}^2_{\frob}$, concluding the proof.
\end{proof}

\subsection{Proof of Theorem~\ref{thm:errorbound}}
\label{sec:proof-thm-errorbound}

\begin{proof} 
Let $y = \hat{\tc}(\tX_1 \times \cdots \times \tX_p, E)$, then, since this is a full TNC,
\begin{align*}
    \E\sqbr*{\norm*{y - \tc(\tX_1 \times \cdots \times \tX_p, E)}^2_{\frob}}
    = \E\sqbr*{\rdbr*{y - \tc(\tX_1 \times \cdots \times \tX_p, E)}^2} = \Var\rdbr*{y}.
\end{align*}
Note that $\rdbr*{1 + 8/m}^{2t} - 1 \leq \exp\rdbr*{16t/m} - 1 \leq 32t/m$ when $m \geq 16t$.
Therefore, by {\bf (1)} Lemma~\ref{lemma:acyclic-expectation-variance}, let $m = \Omega(t/ \epsilon^{2})$; and {\bf (2)} Lemma~\ref{lemma:general-expectation-variance}, let $m = \Omega(3^t / \epsilon^2)$ to obtain an estimator for full ATNC (see Definition~\ref{def:approx-tensor-contraction}) with constant success probability using the Chebyshev inequality. 
Note that $\prod_{k=1}^{p} \norm{\tX_k}_{\frob} = \norm{\tX_1 \times \cdots \times \tX_p}_{\frob}$.
We obtain success with high probability by selecting the median of $O(\log{1/\delta})$ independent and identically distributed estimates with constant success probability using the Chernoff bound.
\end{proof}

\subsection{Proof of Corollary~\ref{corollary:full-to-partial-tc}}
\label{sec:proof-corollary-full-to-partial-tc}

\begin{proof}
    Expanding the squared Frobenius norm of the error, with $\hat{\tY}$ and $\tY$ as the approximate and exact TNC, respectively, gives
\begin{align*}
    \E\sqbr*{\norm*{\hat{\tc}(\tX_1 \times \cdots \times \tX_p, E) - \tc(\tX_1 \times \cdots \times \tX_p, E)}^2_{\frob}}
    = \sum_{\ivr = \ivone}^{\ivn} \E\sqbr*{\rdbr*{\hat{\tY}(\ivr) - \tY(\ivr)}^2} = \sum_{\ivr = \ivone}^{\ivn} \Var\rdbr*{\hat{\tY}(\ivr)}.
\end{align*}
Let $\tX_k(\ivs_k)$ denote the subtensor of $\tX_k$ used to compute the entry $\hat{\tY}(\ivr)$, with 
\begin{align*}
    \ivs_k = (\gamma(u) : u \in V(k)), \qquad \text{and} \qquad \gamma(u) = {\begin{cases}:&{\text{if }}u \in K,\\i_u&{\text{otherwise}},\end{cases}}
\end{align*}
where $V(k)$ are the modes of $\tX_k$ and $K$ the contracted modes.
Then, by {\bf (1)} Lemma~\ref{lemma:acyclic-expectation-variance}, let $m \geq 32t/ (\epsilon^{2}\delta)$; and {\bf (2)} Lemma~\ref{lemma:general-expectation-variance}, let $m \geq 3^t / (\epsilon^2 \delta)$, 
\begin{align*}
    \sum_{\ivr = \ivone}^{\ivn} \Var\rdbr*{\hat{\tY}(\ivr)} \leq \sum_{\ivr = \ivone}^{\ivn} \epsilon^2 \delta \prod_{k = 1}^p \norm{\tX_k(\ivs_k)}_{\frob}^2= \epsilon^2 \delta \prod_{k = 1}^p \norm{\tX_k}_{\frob}^2.
\end{align*}
By the Chebyshev inequality and Definition~\ref{def:approx-tensor-contraction}, this concludes the proof.
\end{proof}

\clearpage
\section{Additional algorithms}
\label{sec:additional-algorithms}

This appendix contains pseudocode for functions used by the main algorithms presented in the paper. We start with Algorithm~\ref{alg:tensor-sketch-hash}, which computes the sign and row of the tensor sketch matrix for a given column (which corresponds to a tensor index). This function is used in Algorithm~\ref{alg:recursive-sketch-hash}, which computes the sign and row of the recursive sketch matrix for a given column. This function, in turn, is used to simultaneously compute the sketch of a tensor and its matrix-vector product in Algorithm~\ref{alg:sketch-matrix-vector-product}.

The tensor sketch matrix of order $q$ is composed of $q$ count sketch matrices, which in turn consist of $q$ sign and row hash functions.
An alternative definition of the tensor sketch matrix $T\in \reals^{m \times n^q}$, specifies each component as $T(j, i) = s_1(i_1)\cdots s_q(i_q) \sqbr{(h_1(i_1) + \dots + h_q(i_q) - q + 1) \equiv j \pmod{m}}$, where $i = 1 + \sum_{k=1}^q (i_k - 1)n^{q-k}$ (the vectorization of tensor index $\ivi$). Based on this formulation, Algorithm~\ref{alg:tensor-sketch-hash} computes the sign and row of the tensor sketch matrix for a given column. This is achieved by multiplying the signs and summing the rows (modulo the sketch size), obtained by the count sketch hash functions for each mode.

\begin{algorithm}[h]
\caption{Tensor sketch matrix sign and row by column.}
\label{alg:tensor-sketch-hash}
\begin{algorithmic}[1]
\Require index $\ivi = \rdbr{i_1, \dots, i_q}$ and tensor sketch matrix $T \in \reals^{m \times n^q}$, which is composed of independent count sketch matrices $C_1, \dots, C_q \in \reals^{m \times n}$ that consist of sign and row hash functions $s_k$ and $h_k$ for all $k \in \sqbr{q}$
\Ensure $\sigma \in \set{-1, 1}$ and $b \in \sqbr{m}$
\Function{TensorSketchHash}{$\ivi, T$}
\State $\sigma \gets 1$
\State $b \gets 0$
\For{$k \in \sqbr{q}$}
\State $\sigma \gets \sigma \cdot s_k(i_k)$
\State $b \gets b + h_k(i_k)$
\EndFor
\State $b \gets ((b - q) \bmod m) + 1$
\State \Return $\sigma, b$
\EndFunction
\end{algorithmic}
\end{algorithm}

Algorithm~\ref{alg:recursive-sketch-hash} computes the sign and row of the recursive sketch matrix for a given column. This is achieved by reducing a list of row indices, which are obtained by the row hash functions of the count sketch matrices for each mode, in a binary tree fashion using Algorithm~\ref{alg:tensor-sketch-hash}. At each step, the signs are accumulated through multiplication.

\begin{algorithm}[h]
\caption{Recursive sketch matrix sign and row by column.}
\label{alg:recursive-sketch-hash}
\begin{algorithmic}[1]
\Require index $\ivi = \rdbr{i_1, \dots, i_q}$ and recursive sketch matrix $R \in \reals^{m \times n^q}$, which is composed of independent count sketch matrices $C_1, \dots, C_q \in \reals^{m \times n}$ that consist of sign and row hash functions $s_k$ and $h_k$ for all $k \in \sqbr{q}$, and independent tensor sketch matrices $T_{l,1}, \dots, T_{l, l/2} \in \reals^{m\times m^2}$ for $l \in \set{2, 4, \dots, q/2, q}$
\Ensure $\sigma \in \set{-1, 1}$ and $b \in \sqbr{m}$
\Function{RecursiveSketchHash}{$\ivi, R$}
\State $\sigma \gets 1$
\State $\vb \in \set{0}^q$
\For{$k \in \sqbr{q}$}
\State $\sigma \gets \sigma \cdot s_k(i_k)$
\State $\evb_k \gets h_k(i_k)$
\EndFor
\For{$l \in \set{q, q/2, \dots, 4, 2}$}
\For{$k \in \sqbr{l/2}$}
\State $s, h \gets \textsc{TensorSketchHash}(\rdbr{b_{2k-1},b_{2k}}, T_{l,k})$
\State $\sigma \gets \sigma \cdot s$
\State $\evb_k \gets h$
\EndFor
\EndFor
\State \Return $\sigma, b_1$
\EndFunction
\end{algorithmic}
\end{algorithm}

Algorithm~\ref{alg:sketch-matrix-vector-product} computes the size-$m$ sketch of an order-$q$ tensor $\tX$ while simultaneously multiplying the resulting sketch matrix with a vector. This is an important function for our second method as it significantly improves upon the complexity of first computing the sketch followed by the matrix-vector product, which would take time $O(q \nnz(\tX) + m^2)$ and space $O(m^2)$, and improves it to just $O(q \nnz(\tX))$ time and $O(m)$ space.

\begin{algorithm}[h]
\caption{Combined tensor sketching and matrix-vector product.}
\label{alg:sketch-matrix-vector-product}
\begin{algorithmic}[1]
\Require order $q$ tensor $\tX \in \reals^{n \times \cdots \times n}$, count sketch matrix $C \in \reals^{m \times n}$ that consists of sign and row hash functions $s$ and $h$, recursive sketch matrix $R \in \reals^{m \times n^{q - 1}}$, and any vector $z \in \reals^m$
\Ensure $y = C \mat(\tX) R\tran z$
\Function{SketchedMatrixVectorProduct}{$C, \tX, R, z$}
\State $\vy \in \set{0}^m$
\For{each component $\tX(\ivi)$}
\State $j \gets h(i_1)$
\State $\sigma, b = \textsc{RecursiveSketchHash}(\rdbr{i_2, i_3, \dots, i_q}, R)$
\State $\evy_{j} \gets \evy_{j} + \etX(\ivi) \cdot z_b \cdot \sigma \cdot s(i_1)$
\EndFor
\State \Return $y$
\EndFunction
\end{algorithmic}
\end{algorithm}

\end{document}